\newtheorem{theorem}{Theorem}[section]
\newtheorem{proposition}[theorem]{Proposition}
\newtheorem{lemma}[theorem]{Lemma}
\newcommand{\be}{\begin{equation}}
\newcommand{\ee}{\end{equation}}
\newcommand{\bea}{\begin{eqnarray}}
\newcommand{\eea}{\end{eqnarray}}
\numberwithin{equation}{section}
\begin{document}

\title{Painlev\'{e} VI, Painlev\'{e} III and the Hankel Determinant Associated with a Degenerate Jacobi Unitary Ensemble}
\author{Chao Min\thanks{School of Mathematical Sciences, Huaqiao University, Quanzhou 362021, China; e-mail: chaomin@hqu.edu.cn}\: and Yang Chen\thanks{Department of Mathematics, Faculty of Science and Technology, University of Macau, Macau, China; e-mail: yangbrookchen@yahoo.co.uk}}


\date{\today}
\maketitle
\begin{abstract}
This paper studies the Hankel determinant generated by a perturbed Jacobi weight, which is closely related to the largest and smallest eigenvalue distribution of the degenerate Jacobi unitary ensemble. By using the ladder operator approach for the orthogonal polynomials, we find that the logarithmic derivative of the Hankel determinant satisfies a nonlinear second-order differential equation, which turns out to be the Jimbo-Miwa-Okamoto $\sigma$-form of the Painlev\'{e} VI equation by a translation transformation. We also show that, after a suitable double scaling, the differential equation is reduced to the Jimbo-Miwa-Okamoto $\sigma$-form of the Painlev\'{e} III. In the end, we obtain the asymptotic behavior of the Hankel determinant as $t\rightarrow1^{-}$ and $t\rightarrow0^{+}$ in two important cases, respectively.
\end{abstract}

$\mathbf{Keywords}$: Random matrix theory; Hankel determinant; Degenerate Jacobi unitary ensemble;

Ladder operators; Painlev\'{e} equations; Double scaling analysis.

$\mathbf{Mathematics\:\: Subject\:\: Classification\:\: 2010}$: 15B52, 42C05, 33E17.

\section{Introduction}
As a fundamental research object in random matrix theory, the Hankel determinant is defined by
$$
D_{n}:=\det\left(\int_{a}^{b}x^{i+j}w(x)dx\right)_{i,j=0}^{n-1},
$$
where $w(x)$ is a weight function supported on the interval $[a,b]$. This Hankel determinant is equal to the partition function of the unitary ensemble of $n\times n$ Hermitian matrices with eigenvalues $\{x_{1}, x_{2}, \ldots, x_{n}\}$ \cite{Mehta},
$$
D_{n}=\frac{1}{n!}\int_{[a,b]^{n}}\prod_{1\leq i<j\leq n}(x_{i}-x_{j})^{2}\prod_{j=1}^{n}w(x_{j})dx_{j}.
$$

Due to the importance in random matrix theory, Hankel determinants generated by the perturbed Gaussian, Laguerre and Jacobi weights have been studied extensively in the past decade, see \cite{Basor2015,Basor2010,Bogatskiy,ChenDai,ChenIts,Dai,Lyu2019,MinLyuChen,Xu2015,Xu2016,Zeng}. They can be viewed as the partition function of the corresponding perturbed unitary ensembles. Hankel determinants also play an important role in the gap probability problems, including the largest and smallest eigenvalue distribution in the unitary ensembles, see \cite{Basor2009,Basor2012,ChenZhang,Lyu2018,Min2018,Min2019a}. In addition, they have been applied to the wireless communication systems \cite{Chen2013,ChenMcKay2012} and many branches of applied mathematics and mathematical physics (see \cite{BCM,CJ,Jimbo1980,McCoy} for example).

In this paper, we consider the Hankel determinant generated by a perturbed Jacobi weight, namely,
\bea\label{dnt}
D_{n}(t):&=&\det\left(\int_{0}^{1}x^{i+j}w(x,t)dx\right)_{i,j=0}^{n-1}\nonumber\\
&=&\frac{1}{n!}\int_{[0,1]^{n}}\prod_{1\leq i<j\leq n}(x_{i}-x_{j})^{2}\prod_{j=1}^{n}w(x_{j},t)dx_{j},
\eea
where
$$
w(x,t):=x^{\alpha}(1-x)^{\beta}|x-t|^{\gamma}(A+B\theta(x-t)),\;\; x,\;t\in [0,1],\;\alpha,\;\beta,\;\gamma>0.
$$
Here $\theta(x)$ is the Heaviside step function, i.e., $\theta(x)$ is 1 for $x>0$ and 0 otherwise; $A$ and $B$ are constants, and $A\geq 0,\; A+B\geq 0$.

As stated in the previous work \cite{Min2019b}, this Hankel determinant is connected with the partition function of the Jacobi unitary ensemble with a single eigenvalue degeneracy, and the largest and smallest eigenvalue distribution of this degenerate Jacobi unitary ensemble. See also \cite{Wu} on the study of the Hankel determinant for the Gaussian case.

As a matter of fact, the Hankel determinants generated by perturbed Gaussian, Laguerre and Jacobi weights are usually related to the well-known nonlinear second-order ordinary differential equations-Painlev\'{e} equations. Chen and Zhang \cite{ChenZhang} studied the $\gamma=0$ case in (\ref{dnt}), and showed that the logarithmic derivative of the related Hankel determinant satisfies a particular $\sigma$-form of Painlev\'{e} VI. Dai and Zhang \cite{Dai} considered the situation of $t<0$ and $A=1, B=0$ in (\ref{dnt}), and established the relation of the Hankel determinant with another particular Painlev\'{e} VI. The main difference of our problem and \cite{Dai} is that our weight vanishes at a singular point $t$ in the interior of the support, and this will increase the difficulty in the analysis. Actually, they mentioned the case with $A=1, B=0$ in (\ref{dnt}), but did not prove it (see Remark 1.2 in \cite{Dai}). We will show that the results in \cite{Dai} are still valid in our problem.

The approach in this paper is the ladder operators associated with the orthogonal polynomials. We first write the Hankel determinant as the product of the square of the norm of the corresponding monic orthogonal polynomials (see (\ref{hankel})). Then we use the ladder operator approach to analyze the properties of the orthogonal polynomials to obtain a series of difference and differential equations. From this we establish the relation of the Hankel determinant with the Painlev\'{e} equations. This approach has been widely applied to the Hankel determinants for various perturbed weight functions, see \cite{Basor2009,Basor2015,Basor2010,Basor2012,ChenDai,ChenIts,ChenZhang,Dai,Lyu2019,Min2018,Min2019a,Min2019b,MinLyuChen}. We will state some elementary facts about the orthogonal polynomials at first.

Let $P_{n}(x,t)$ be the monic polynomials of degree $n$ orthogonal with respect to the weight $w(x,t)$,
\be\label{ops}
\int_{0}^{1}P_{m}(x,t)P_{n}(x,t)w(x,t)dx=h_{n}(t)\delta_{mn},\;\;m, n=0,1,2,\ldots,
\ee
where
$$
P_{n}(x,t)=x^{n}+\mathrm{p}(n,t)x^{n-1}+\cdots.
$$
We will see that $\mathrm{p}(n,t)$, the coefficient of $x^{n-1}$, plays an important role in the following discussions.

The three-term recurrence relation of the orthogonal polynomials shows that \cite{Chihara,Szego}
\be\label{rr}
xP_{n}(x,t)=P_{n+1}(x,t)+\alpha_{n}(t)P_{n}(x,t)+\beta_{n}(t)P_{n-1}(x,t),
\ee
supplemented by the initial conditions
$$
P_{0}(x,t)=1,\;\;\beta_{0}(t)P_{-1}(x,t)=0.
$$
It is easy to see that
\be\label{al}
\alpha_{n}(t)=\mathrm{p}(n,t)-\mathrm{p}(n+1,t)
\ee
and
\be\label{be}
\beta_{n}(t)=\frac{h_{n}(t)}{h_{n-1}(t)}.
\ee
A telescopic sum of (\ref{al}) gives
\be\label{sum}
\sum_{j=0}^{n-1}\alpha_{j}(t)=-\mathrm{p}(n,t).
\ee
Finally, it is well known that \cite{Ismail}
\be\label{hankel}
D_{n}(t)=\prod_{j=0}^{n-1}h_{j}(t).
\ee

This paper is organized as follows. Sec. 2 applies the ladder operators and the associated compatibility conditions to the perturbed Jacobi weight to obtain some important equations. Sec. 3 mainly derives the Painlev\'{e} VI equation sasisfied by the logarithmic derivative of the associated Hankel determinant. Sec. 4 shows that our problem is related to the Painlev\'{e} III equation under a suitable double scaling. In addition, the asymptotic behavior of the Hankel determinant is given in two special cases.

\section{Ladder Operators and its Compatibility Conditions}
In the following discussions, for convenience, we shall not display the $t$ dependence in $P_{n}(x)$, $w(x)$, $h_{n}$, $\alpha_{n}$ and $\beta_{n}$ unless it is needed. The following lemmas have been proved by Min and Chen \cite{Min2019b}, following Basor and Chen \cite{Basor2009} and Chen and Feigin \cite{ChenFeigin}.
\begin{lemma}\label{lo}
Let $w_{0}(x)$ be a smooth weight function defined on $[a,b]$, and $w_{0}(a)=w_{0}(b)=0$. The monic orthogonal polynomials with respect to $w(x):=w_{0}(x)|x-t|^{\gamma}(A+B\theta(x-t)),\; t\in[a,b],\; \gamma>0,\; A\geq 0,\; A+B\geq 0$ satisfy the lowering operator equation
\be\label{lowering}
\left(\frac{d}{dz}+B_{n}(z)\right)P_{n}(z)=\beta_{n}A_{n}(z)P_{n-1}(z),
\ee
and the raising operator equation
\be\label{raising}
\left(\frac{d}{dz}-B_{n}(z)-\mathrm{v}_{0}'(z)\right)P_{n-1}(z)=-A_{n-1}(z)P_{n}(z),
\ee
where
$$
A_{n}(z):=\frac{1}{h_{n}}\int_{a}^{b}\frac{\mathrm{v}_{0}'(z)-\mathrm{v}_{0}'(y)}{z-y}P_{n}^{2}(y)w(y)dy+a_{n}(z,t),
$$
$$
a_{n}(z,t):=\frac{\gamma}{h_{n}}\int_{a}^{b}\frac{P_{n}^{2}(y)}{(z-y)(y-t)}w(y)dy;
$$
$$
B_{n}(z):=\frac{1}{h_{n-1}}\int_{a}^{b}\frac{\mathrm{v}_{0}'(z)-\mathrm{v}_{0}'(y)}{z-y}P_{n}(y)P_{n-1}(y)w(y)dy+b_{n}(z,t),
$$
$$
b_{n}(z,t):=\frac{\gamma}{h_{n-1}}\int_{a}^{b}\frac{P_{n}(y)P_{n-1}(y)}{(z-y)(y-t)}w(y)dy
$$
and $\mathrm{v}_{0}(z)=-\ln w_{0}(z)$.
\end{lemma}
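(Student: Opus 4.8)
The plan is to run the standard ladder-operator argument for orthogonal polynomials, with attention to the interior point $z=t$ where the weight is singular. Since $P_n'(z)$ is a polynomial of degree $n-1$, I would first expand
\[
P_n'(z)=\sum_{k=0}^{n-1}\frac{P_k(z)}{h_k}\int_a^b P_n'(y)P_k(y)w(y)\,dy
\]
and evaluate each coefficient by integration by parts. Splitting $\int_a^b=\int_a^t+\int_t^b$ is exactly where the hypotheses enter: the endpoint contributions at $a$ and $b$ vanish because $w(a)=w(b)=0$ (as $w_0$ vanishes there), and those at $y=t$ vanish because $|z-t|^{\gamma}$, hence $w$, vanishes at $z=t$ (here $\gamma>0$ is used), so the jump of $A+B\theta(y-t)$ contributes no surface term. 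Using $w'/w=-\mathrm{v}_0'(y)+\gamma/(y-t)$ away from $y=t$, together with $\int_a^b P_n(y)P_k'(y)w(y)\,dy=0$ for $k\le n-1$, one arrives at
\[
\int_a^b P_n'(y)P_k(y)w(y)\,dy=\int_a^b P_n(y)P_k(y)\left(\mathrm{v}_0'(y)-\frac{\gamma}{y-t}\right)w(y)\,dy,
\]
the integral converging precisely because $\gamma>0$ makes $|y-t|^{\gamma-1}$ locally integrable.

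Next I would resum over $k$ via the Christoffel--Darboux identity $\sum_{k=0}^{n-1}h_k^{-1}P_k(z)P_k(y)=[P_n(z)P_{n-1}(y)-P_{n-1}(z)P_n(y)]/[h_{n-1}(z-y)]$, which presents $P_n'(z)$ as $P_n(z)$ times one integral minus $P_{n-1}(z)$ times another. In the $\mathrm{v}_0'$ part I would split $\mathrm{v}_0'(y)/(z-y)=\mathrm{v}_0'(z)/(z-y)-[\mathrm{v}_0'(z)-\mathrm{v}_0'(y)]/(z-y)$: the divided-difference term is exactly the kernel appearing in $A_n$ and $B_n$, while the residual $\mathrm{v}_0'(z)$-terms, once multiplied back by $P_n(z)$ and $P_{n-1}(z)$ and added, recombine into $\mathrm{v}_0'(z)\sum_{k=0}^{n-1}h_k^{-1}P_k(z)\int_a^b P_n(y)P_k(y)w(y)\,dy=0$ and drop out. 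The $\gamma/(y-t)$ part needs no such splitting: it produces directly the integrals defining $a_n(z,t)$ and $b_n(z,t)$. Reading off the coefficient of $P_n(z)$ gives $-B_n(z)$, and that of $P_{n-1}(z)$ gives $(h_n/h_{n-1})A_n(z)=\beta_nA_n(z)$ by (\ref{be}), which is (\ref{lowering}). The raising operator (\ref{raising}) then follows from a parallel computation for $P_{n-1}'(z)$, or equivalently from (\ref{lowering}) combined with the three-term recurrence (\ref{rr}).

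I expect the only real obstacle — and the sole departure from the textbook smooth-weight case — to be controlling the interior singularity of $w$ at $z=t$ caused by $|x-t|^{\gamma}(A+B\theta(x-t))$. Two points must be checked, and $\gamma>0$ is exactly what makes both work: the auxiliary integrals $a_n(z,t)$, $b_n(z,t)$ and the intermediate integral above converge because their integrands are $O(|y-t|^{\gamma-1})$ near $y=t$; and the discontinuity of the Heaviside factor contributes no boundary term under integration by parts because, after splitting the integral at $t$, every resulting endpoint term at $y=t$ carries the vanishing factor $w(t)=0$. The remaining hypotheses $A\ge 0$ and $A+B\ge 0$ guarantee $w\ge 0$, hence — unless $A$ and $A+B$ both vanish — that the norms $h_n$ are positive and the monic polynomials $P_n$ well defined; and the factors $1/(z-y)$ in $a_n(z,t)$ and $b_n(z,t)$ are to be read for $z\notin[a,b]$ and extended by analyticity, which suffices for everything that follows.
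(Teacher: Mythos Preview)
Your proposal is correct and follows the standard Fourier-expansion/Christoffel--Darboux argument used in the references the paper cites for this lemma (Min--Chen, Basor--Chen, Chen--Feigin); the paper itself does not give a proof but defers to those works. Your handling of the interior singularity at $y=t$---splitting the integral there and observing that the endpoint contributions vanish because $|y-t|^{\gamma}\to 0$ for $\gamma>0$---is exactly the point that distinguishes this setting from the textbook smooth-weight case, and your remark on integrability of $|y-t|^{\gamma-1}$ is the right justification for the convergence of $a_n$ and $b_n$.
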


\begin{lemma}\label{s1s2}
The functions $A_{n}(z)$ and $B_{n}(z)$ satisfy the following equations:
\be
B_{n+1}(z)+B_{n}(z)=(z-\alpha_{n})A_{n}(z)-\mathrm{v}_{0}'(z), \tag{$S_{1}$}
\ee
\be
1+(z-\alpha_{n})(B_{n+1}(z)-B_{n}(z))=\beta_{n+1}A_{n+1}(z)-\beta_{n}A_{n-1}(z), \tag{$S_{2}$}
\ee
\be
B_{n}^{2}(z)+\mathrm{v}_{0}'(z)B_{n}(z)+\sum_{j=0}^{n-1}A_{j}(z)=\beta_{n}A_{n}(z)A_{n-1}(z). \tag{$S_{2}'$}
\ee
\end{lemma}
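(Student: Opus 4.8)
The plan is to derive Lemma~\ref{s1s2} as a purely algebraic consequence of the two operator equations of Lemma~\ref{lo} together with the three-term recurrence relation~(\ref{rr}) and orthogonality; note that the singular factors $|x-t|^{\gamma}$ and $A+B\theta(x-t)$ in $w(x)$ play no further role here, having already been absorbed into $A_{n}(z)$, $B_{n}(z)$ and the auxiliary terms $a_{n}(z,t)$, $b_{n}(z,t)$. I would first establish $(S_{1})$ and $(S_{2})$ together, and then obtain $(S_{2}')$ as a ``first integral'' of that pair.

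For $(S_{1})$ and $(S_{2})$: differentiate~(\ref{rr}) in $z$ to get $P_{n+1}'(z)=P_{n}(z)+(z-\alpha_{n})P_{n}'(z)-\beta_{n}P_{n-1}'(z)$, then eliminate every derivative using Lemma~\ref{lo}---substitute the lowering equation~(\ref{lowering}) at index $n+1$ for $P_{n+1}'$, the lowering equation at index $n$ for $P_{n}'$, and the raising equation~(\ref{raising}) for $P_{n-1}'$. Re-expressing $P_{n+1}$ once more via~(\ref{rr}) collapses everything into a linear combination $p(z)P_{n}(z)+q(z)P_{n-1}(z)=0$, whose coefficients are built from $A_{n-1},A_{n},A_{n+1}$, $B_{n},B_{n+1}$, $z-\alpha_{n}$, $\beta_{n},\beta_{n+1}$ and $\mathrm{v}_{0}'(z)$; equating the coefficients of $P_{n}$ and of $P_{n-1}$ to zero then yields $(S_{2})$ and $(S_{1})$ respectively. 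Alternatively, and with essentially the same effort, both identities follow by substituting~(\ref{rr}) directly into the integral representations of $A_{n}$ and $B_{n}$ and using orthogonality to collapse the resulting integrals, a route that has the advantage of handling the $a_{n}$, $b_{n}$ contributions transparently.

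For $(S_{2}')$: set $F_{n}(z):=B_{n}^{2}(z)+\mathrm{v}_{0}'(z)B_{n}(z)+\sum_{j=0}^{n-1}A_{j}(z)-\beta_{n}A_{n}(z)A_{n-1}(z)$, so that the assertion is $F_{n}\equiv0$. Multiplying $(S_{1})$ by $B_{n+1}(z)-B_{n}(z)$ gives $B_{n+1}^{2}-B_{n}^{2}=\bigl[(z-\alpha_{n})A_{n}-\mathrm{v}_{0}'\bigr](B_{n+1}-B_{n})$, into which one substitutes $(z-\alpha_{n})(B_{n+1}-B_{n})=\beta_{n+1}A_{n+1}-\beta_{n}A_{n-1}-1$ coming from $(S_{2})$; feeding the resulting expression for $B_{n+1}^{2}-B_{n}^{2}$ into $F_{n+1}-F_{n}$, all terms cancel and $F_{n+1}(z)=F_{n}(z)$ for every $n\geq0$. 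Since the conventions $\beta_{0}=0$ and $B_{0}(z)\equiv0$ make $F_{0}$ vanish identically (its sum being empty), we conclude $F_{n}\equiv0$ for all $n$, which is $(S_{2}')$.

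The only step that is not pure bookkeeping is the passage from an identity $p(z)P_{n}(z)+q(z)P_{n-1}(z)=0$ to $p\equiv q\equiv0$. Here $p$ and $q$ are rational in $z$ rather than polynomial, since $\mathrm{v}_{0}'(z)=-\alpha/z+\beta/(1-z)$ and, through $a_{n}$ and $b_{n}$, the functions $A_{n}$ and $B_{n}$ carry a simple pole at $z=t$; one clears denominators to obtain a polynomial identity and then invokes that consecutive monic orthogonal polynomials $P_{n}$ and $P_{n-1}$ have no common zero, so that neither divides the coefficient of the other, together with degree control on those coefficients. This is the point where I would be most careful, although it is handled in the standard way; following the direct integral-manipulation route mentioned above avoids the issue entirely.
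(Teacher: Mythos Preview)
Your argument is correct and is precisely the standard derivation of the compatibility conditions found in the ladder-operator literature. Note, however, that the paper does \emph{not} supply its own proof of this lemma: it simply states that ``the following lemmas have been proved by Min and Chen \cite{Min2019b}, following Basor and Chen \cite{Basor2009} and Chen and Feigin \cite{ChenFeigin}'', so there is nothing to compare against here beyond observing that your proof is the one appearing in those references.

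Two small remarks. First, your aside that $A_{n}$ and $B_{n}$ carry ``a simple pole at $z=t$'' is not quite accurate in this setting: since $t$ lies inside the support $[0,1]$, the integrals $a_{n}(z,t)$ and $b_{n}(z,t)$ are Cauchy-type transforms with a branch-type singularity at $z=t$ rather than an isolated pole, and indeed the paper never treats them as having rational structure in $z$ (contrast Theorem~2.4, which only uses their large-$z$ expansions). This does not affect your proof, since the compatibility conditions hold as identities between the integral representations for $z$ off the support, and your preferred ``direct integral-manipulation route'' sidesteps the issue entirely. Second, for the $p(z)P_{n}+q(z)P_{n-1}=0$ step, the cleanest justification is not degree-counting but simply that $P_{n}$ and $P_{n-1}$ are linearly independent over the field of functions meromorphic off $[0,1]$; since the identity holds on an open set, both coefficients vanish.
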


\begin{lemma}
The monic orthogonal polynomials $P_{n}(z)$ satisfy the second-order linear ordinary differential equation
$$
P_{n}''(z)-\left(\mathrm{v}_{0}'(z)+\frac{A_{n}'(z)}{A_{n}(z)}\right)P_{n}'(z)+\left(B_{n}'(z)-B_{n}(z)\frac{A_{n}'(z)}{A_{n}(z)}
+\sum_{j=0}^{n-1}A_{j}(z)\right)P_{n}(z)=0.
$$
\end{lemma}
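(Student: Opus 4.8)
The plan is to obtain the differential equation by a purely algebraic elimination of $P_{n-1}(z)$, using the lowering operator (\ref{lowering}), the raising operator (\ref{raising}), and the compatibility condition $(S_2')$ of Lemma \ref{s1s2}. First I would rewrite (\ref{lowering}) as
$$
P_{n}'(z)=-B_{n}(z)P_{n}(z)+\beta_{n}A_{n}(z)P_{n-1}(z),
$$
which expresses $\beta_{n}P_{n-1}(z)$ in terms of $P_{n}(z)$ and $P_{n}'(z)$, namely $\beta_{n}P_{n-1}(z)=A_{n}(z)^{-1}\bigl(P_{n}'(z)+B_{n}(z)P_{n}(z)\bigr)$. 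The goal is to eliminate $P_{n-1}$ entirely and arrive at a closed second-order equation for $P_{n}$.

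Next I would differentiate (\ref{lowering}) once more in $z$. This produces the term $\beta_{n}A_{n}(z)P_{n-1}'(z)$, which I replace using the raising operator (\ref{raising}) in the form $P_{n-1}'(z)=\bigl(B_{n}(z)+\mathrm{v}_{0}'(z)\bigr)P_{n-1}(z)-A_{n-1}(z)P_{n}(z)$. After this substitution the only remaining occurrences of $P_{n-1}$ appear through the combination $\beta_{n}P_{n-1}(z)$, which I then replace by $A_{n}(z)^{-1}\bigl(P_{n}'(z)+B_{n}(z)P_{n}(z)\bigr)$ from the previous step. Collecting the coefficients of $P_{n}''$, $P_{n}'$ and $P_{n}$, and noting that the $B_{n}(z)P_{n}'(z)$ contributions cancel, I expect to reach
$$
P_{n}''(z)-\Bigl(\mathrm{v}_{0}'(z)+\frac{A_{n}'(z)}{A_{n}(z)}\Bigr)P_{n}'(z)+\Bigl(B_{n}'(z)-B_{n}(z)\frac{A_{n}'(z)}{A_{n}(z)}-B_{n}^{2}(z)-\mathrm{v}_{0}'(z)B_{n}(z)+\beta_{n}A_{n}(z)A_{n-1}(z)\Bigr)P_{n}(z)=0.
$$

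Finally I would invoke $(S_2')$ in the rearranged form $\beta_{n}A_{n}(z)A_{n-1}(z)-B_{n}^{2}(z)-\mathrm{v}_{0}'(z)B_{n}(z)=\sum_{j=0}^{n-1}A_{j}(z)$, substitute it into the coefficient of $P_{n}(z)$ above, and obtain exactly the claimed equation. The argument is entirely formal: all the data, including the extra pieces $a_{n}$, $b_{n}$ coming from the interior singularity at $t$, have already been absorbed into $A_{n}$ and $B_{n}$ by Lemma \ref{lo}, so nothing new enters here. There is no genuine obstacle; the only thing demanding care is the bookkeeping — differentiating $A_{n}(z)^{-1}$ correctly and tracking the signs when substituting from the raising operator — so the lemma follows as a direct consequence of (\ref{lowering}), (\ref{raising}) and $(S_2')$.
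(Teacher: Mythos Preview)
Your proposal is correct and follows exactly the standard derivation: differentiate the lowering operator, use the raising operator to eliminate $P_{n-1}'$, then the lowering operator again to eliminate $P_{n-1}$, and finally invoke $(S_2')$ to replace $\beta_nA_nA_{n-1}-B_n^2-\mathrm{v}_0'B_n$ by $\sum_{j=0}^{n-1}A_j$. The paper does not actually give its own proof of this lemma---it cites \cite{Min2019b,Basor2009,ChenFeigin}---and your argument is precisely the one found in those references.
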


\noindent $\mathbf{Remark\: 1.}$ Equations (\ref{lowering}) and (\ref{raising}) are called the ladder operators. The compatibility conditions ($S_{1}$), ($S_{2}$) and ($S_{2}'$) are valid for $z\in \mathbb{C}\cup\{\infty\}$.

For our problem, the weight and associated quantities are
$$
w(x)=w_{0}(x)|x-t|^{\gamma}(A+B\theta(x-t)),\;\; x,\;t\in[0,1],\;\gamma>0,
$$
$$
w_{0}(x)=x^{\alpha}(1-x)^{\beta},\;\;\mathrm{v}_{0}(x)=-\alpha\ln x-\beta\ln(1-x),\;\;\alpha,\;\beta>0,
$$
$$
\frac{\mathrm{v}_{0}'(z)-\mathrm{v}_{0}'(y)}{z-y}=\frac{\alpha}{zy}+\frac{\beta}{(1-z)(1-y)}.
$$
From Lemma \ref{lo} and noting that $w_{0}(0)=w_{0}(1)=0$, we have
\be\label{anz}
A_{n}(z)=\frac{x_{n}(t)}{z}-\frac{R_{n}(t)}{z-1}+a_{n}(z,t),
\ee
where
$$
x_{n}(t):=\frac{\alpha}{h_{n}}\int_{0}^{1}\frac{P_{n}^{2}(y)w(y)}{y}dy,
$$
$$
R_{n}(t):=\frac{\beta}{h_{n}}\int_{0}^{1}\frac{P_{n}^{2}(y)w(y)}{1-y}dy,
$$
$$
a_{n}(z,t)=\frac{\gamma}{h_{n}}\int_{0}^{1}\frac{P_{n}^{2}(y)w(y)}{(z-y)(y-t)}dy,
$$
and
\be\label{bnz}
B_{n}(z)=\frac{y_{n}(t)}{z}-\frac{r_{n}(t)}{z-1}+b_{n}(z,t),
\ee
where
$$
y_{n}(t):=\frac{\alpha}{h_{n-1}}\int_{0}^{1}\frac{P_{n}(y)P_{n-1}(y)w(y)}{y}dy,
$$
$$
r_{n}(t):=\frac{\beta}{h_{n-1}}\int_{0}^{1}\frac{P_{n}(y)P_{n-1}(y)w(y)}{1-y}dy,
$$
$$
b_{n}(z,t)=\frac{\gamma}{h_{n-1}}\int_{0}^{1}\frac{P_{n}(y)P_{n-1}(y)w(y)}{(z-y)(y-t)}dy.
$$

\begin{theorem}
As $z\rightarrow\infty$, $A_{n}(z)$ and $B_{n}(z)$ have the following series expansions:
\bea\label{anz1}
A_{n}(z)&=&\frac{\gamma+(t-1)R_{n}-t x_{n}}{z^2}+\frac{\gamma(t+\alpha_{n})+(t^2-1)R_n-t^2x_{n}}{z^3}\nonumber\\
&+&\frac{\gamma(t^2+\alpha_{n}^2+t\alpha_{n}+\beta_{n}+\beta_{n+1})+(t^3-1)R_n-t^3x_{n}}{z^4}+O\left(\frac{1}{z^5}\right),
\eea
\bea\label{bnz1}
B_{n}(z)&=&-\frac{n}{z}+\frac{(t-1) r_{n}-ty_{n}-nt}{z^2}+\frac{\gamma\beta_{n}-nt^2+(t^2-1) r_n-t^2 y_{n}}{z^3}\nonumber\\
&+&\frac{\gamma\beta_{n}(t+\alpha_{n}+\alpha_{n-1})+t^3(r_{n}-y_{n}-n)-r_n}{z^4}+O\left(\frac{1}{z^5}\right).
\eea
\end{theorem}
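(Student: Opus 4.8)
The plan is to take the closed-form expressions (\ref{anz}) and (\ref{bnz}) and expand every constituent as a power series in $1/z$ as $z\to\infty$. Only two kinds of input are needed beyond this: a handful of low-order moments of $P_n^2w$ and of $P_nP_{n-1}w$, obtained from orthogonality and the recurrence (\ref{rr})--(\ref{be}); and two identities, proved by integration by parts, that tie the leading contribution of $a_n(z,t)$ (resp. $b_n(z,t)$) to the quantities $x_n,R_n$ (resp. $y_n,r_n,n$). The second point is the only step that is not pure bookkeeping.

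For $a_n(z,t)$ I would substitute $\frac{1}{z-y}=\sum_{k\ge0}y^kz^{-k-1}$ together with the polynomial-division identity $\frac{y^k}{y-t}=\sum_{j=0}^{k-1}t^jy^{k-1-j}+\frac{t^k}{y-t}$, which gives
$$
a_n(z,t)=\sum_{k\ge0}\frac{1}{z^{k+1}}\Bigl(\gamma\sum_{j=0}^{k-1}t^j\mu_{k-1-j}+t^k\mathsf{I}_n\Bigr),\qquad
\mu_m:=\frac{1}{h_n}\int_0^1 y^mP_n^2(y)w(y)\,dy,\qquad
\mathsf{I}_n:=\frac{\gamma}{h_n}\int_0^1\frac{P_n^2(y)w(y)}{y-t}\,dy,
$$
and the analogous formula for $b_n(z,t)$ with $\mu_m,\mathsf{I}_n$ replaced by $\nu_m:=\frac{1}{h_{n-1}}\int_0^1 y^mP_nP_{n-1}w\,dy$ and $\mathsf{J}_n:=\frac{\gamma}{h_{n-1}}\int_0^1\frac{P_nP_{n-1}w}{y-t}\,dy$. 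From (\ref{ops}), (\ref{rr}) and (\ref{be}) one gets $\mu_0=1$, $\mu_1=\alpha_n$, $\mu_2=\alpha_n^2+\beta_n+\beta_{n+1}$ and $\nu_0=0$, $\nu_1=\beta_n$, $\nu_2=(\alpha_n+\alpha_{n-1})\beta_n$, which is enough to control both expansions through order $z^{-4}$.

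The key step is the pair of identities
$$
\mathsf{I}_n=R_n-x_n,\qquad \mathsf{J}_n=r_n-y_n-n .
$$
To prove the first, note that on $(0,t)$ and on $(t,1)$ the weight is smooth with $w'(y)=g(y)w(y)$, where $g(y):=\frac{\alpha}{y}-\frac{\beta}{1-y}+\frac{\gamma}{y-t}$ is the logarithmic derivative of the smooth factor $y^{\alpha}(1-y)^{\beta}|y-t|^{\gamma}$. Splitting $\int_0^1$ at $t$ and integrating by parts on each subinterval,
$$
\int_0^1 P_n^2(y)\,g(y)\,w(y)\,dy=-\int_0^1\bigl(P_n^2(y)\bigr)'w(y)\,dy ;
$$
the boundary terms vanish because $w(y)\to0$ as $y\to0,t,1$ (here $\alpha,\beta,\gamma>0$ is used, and the same hypothesis makes every integral convergent), and the right-hand side vanishes since $(P_n^2)'=2P_nP_n'$ with $\deg P_n'<n$ and $P_n$ orthogonal to all polynomials of lower degree. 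In view of the definitions of $x_n$, $R_n$, $\mathsf{I}_n$, the displayed relation reads exactly $x_n-R_n+\mathsf{I}_n=0$. The second identity follows by the same argument with $P_n^2$ replaced by $P_nP_{n-1}$; now $\int_0^1(P_nP_{n-1})'w\,dy=\int_0^1 P_n'P_{n-1}w\,dy=n\,h_{n-1}$, because $P_n'=nP_{n-1}+(\text{lower-order terms})$, and this is what produces the term $-n$.

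To conclude, I would insert the moment values and the two identities into the series for $a_n$ and $b_n$, add the elementary expansions $x_n/z$ and $-R_n/(z-1)=-R_n\sum_{k\ge1}z^{-k}$ (and their $y_n$, $r_n$ counterparts), and read off the coefficients of $z^{-2}$, $z^{-3}$, $z^{-4}$. The coefficient of $z^{-1}$ cancels, which is exactly the assertion that $A_n(z)=O(z^{-2})$ and $B_n(z)=-n/z+O(z^{-2})$, and what is left reproduces (\ref{anz1}) and (\ref{bnz1}). The main obstacle is the pair of integration-by-parts identities, above all checking that the boundary contributions at the interior singular point $t$ (and at $0$, $1$) really do drop out; everything else is routine.
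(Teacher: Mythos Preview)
Your proof is correct and follows essentially the same route as the paper: expand $1/(z-y)$ in powers of $1/z$, reduce the resulting integrals to low-order moments via the recurrence and orthogonality, and derive the two integration-by-parts identities $\mathsf{I}_n=R_n-x_n$, $\mathsf{J}_n=r_n-y_n-n$. The only cosmetic difference is that the paper handles the integration by parts across $y=t$ using the distributional formulas $\partial_y|y-t|^\gamma=\delta(y-t)\bigl((y-t)^\gamma-(t-y)^\gamma\bigr)+\gamma\frac{|y-t|^\gamma}{y-t}$ and $\partial_y\theta(y-t)=\delta(y-t)$, whereas you split the interval at $t$ and check that the boundary terms vanish; the two arguments are equivalent.
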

\begin{proof}
As $z\rightarrow\infty$,
$$
\frac{1}{z-1}=\frac{1}{z}\cdot\frac{1}{1-\frac{1}{z}}=\frac{1}{z}\left(1+\frac{1}{z}+\left(\frac{1}{z}\right)^2+\left(\frac{1}{z}\right)^3
+O\left(\frac{1}{z^4}\right)\right)
=\frac{1}{z}+\frac{1}{z^2}+\frac{1}{z^3}+\frac{1}{z^4}+O\left(\frac{1}{z^5}\right)
$$
and
$$
\frac{1}{z-y}=\frac{1}{z}\cdot\frac{1}{1-\frac{y}{z}}=\frac{1}{z}\left(1+\frac{y}{z}+\left(\frac{y}{z}\right)^2+\left(\frac{y}{z}\right)^3
+O\left(\frac{1}{z^4}\right)\right)
=\frac{1}{z}+\frac{y}{z^2}+\frac{y^2}{z^3}+\frac{y^3}{z^4}+O\left(\frac{1}{z^5}\right).
$$
Then from the definition of $a_n(z,t)$ and $b_n(z,t)$ we have
\bea
a_n(z,t)&=&\frac{\gamma}{zh_{n}}\int_{0}^{1}\frac{P_{n}^{2}(y)w(y)}{y-t}dy+\frac{\gamma}{z^2h_{n}}\int_{0}^{1}\frac{yP_{n}^{2}(y)w(y)}{y-t}dy
+\frac{\gamma}{z^3h_{n}}\int_{0}^{1}\frac{y^2P_{n}^{2}(y)w(y)}{y-t}dy\nonumber\\
&+&\frac{\gamma}{z^4h_{n}}\int_{0}^{1}\frac{y^3P_{n}^{2}(y)w(y)}{y-t}dy+O\left(\frac{1}{z^5}\right),\nonumber
\eea
\bea
b_n(z,t)&=&\frac{\gamma}{zh_{n-1}}\int_{0}^{\infty}\frac{P_{n}(y)P_{n-1}(y)w(y)}{y-t}dy+\frac{\gamma}{z^2h_{n-1}}\int_{0}^{\infty}
\frac{yP_{n}(y)P_{n-1}(y)w(y)}{y-t}dy\nonumber\\
&+&\frac{\gamma}{z^3h_{n-1}}\int_{0}^{\infty}\frac{y^2P_{n}(y)P_{n-1}(y)w(y)}{y-t}dy
+\frac{\gamma}{z^4h_{n-1}}\int_{0}^{\infty}\frac{y^3P_{n}(y)P_{n-1}(y)w(y)}{y-t}dy+O\left(\frac{1}{z^5}\right).\nonumber
\eea
Through integration by parts, we find
\bea
\alpha\int_{0}^{1}\frac{P_{n}^{2}(y)w(y)}{y}dy&=&\int_{0}^{1}P_{n}^{2}(y)(1-y)^{\beta}|y-t|^{\gamma}(A+B\theta(y-t))dy^{\alpha}\nonumber\\
&=&\beta\int_{0}^{1}\frac{P_{n}^{2}(y)w(y)}{1-y}dy-\gamma\int_{0}^{1}\frac{P_{n}^{2}(y)w(y)}{y-t}dy,
\eea
where we have used the formula \cite{ChenFeigin}
$$
\partial_{y}|y-t|^{\gamma}=\delta(y-t)((y-t)^{\gamma}-(t-y)^{\gamma})+\gamma\frac{|y-t|^{\gamma}}{y-t}
$$
and
$$
\partial_{y}\theta(y-t)=\delta(y-t).
$$
It follows that
\be\label{imp1}
\frac{\gamma}{h_{n}}\int_{0}^{1}\frac{P_{n}^{2}(y)w(y)}{y-t}dy=R_{n}-x_{n}.
\ee
Similarly, we have
\be\label{imp2}
\frac{\gamma}{h_{n-1}}\int_{0}^{1}\frac{P_{n}(y)P_{n-1}(y)w(y)}{y-t}dy=r_{n}-y_{n}-n.
\ee
With the aid of the recurrence relation (\ref{rr}) and the orthogonality (\ref{ops}), we find
\bea
a_{n}(z,t)&=&\frac{R_{n}-x_{n}}{z}+\frac{\gamma+t(R_{n}- x_{n})}{z^2}+\frac{\gamma(t+\alpha_{n})+t^2(R_n-x_{n})}{z^3}\nonumber\\
&+&\frac{\gamma(t^2+\alpha_{n}^2+t\alpha_{n}+\beta_{n}+\beta_{n+1})+t^3(R_n-x_{n})}{z^4}+O\left(\frac{1}{z^5}\right),\nonumber
\eea
\bea
b_{n}(z,t)&=&\frac{r_{n}-y_{n}-n}{z}+\frac{t (r_{n}-y_{n}-n)}{z^2}+\frac{\gamma\beta_{n}+t^2(r_{n}-y_{n}-n)}{z^3}\nonumber\\
&+&\frac{\gamma\beta_{n}(t+\alpha_{n}+\alpha_{n-1})+t^3(r_{n}-y_{n}-n)}{z^4}+O\left(\frac{1}{z^5}\right).\nonumber
\eea
In view of (\ref{anz}) and (\ref{bnz}), we obtain the desired results.
\end{proof}
Substituting (\ref{anz1}) and (\ref{bnz1}) into ($S_{1}$), we get the following three equations by comparing the coefficients of $\frac{1}{z}$, $\frac{1}{z^2}$ and $\frac{1}{z^3}$ on both sides:
\be\label{s11}
(t-1) R_n-tx_{n}+\alpha+\beta+\gamma+2n+1=0,
\ee
\be\label{s12}
(t-1)(r_{n+1}+r_{n})-t(y_{n+1}+y_{n})=\beta+(2n+1+\gamma)t+(t^2-\alpha_{n}t+\alpha_{n}-1)R_{n}+(t\alpha_{n}-t^2)x_{n},
\ee
\be\label{s13}
(t^2-1)(r_{n+1}+r_{n})-t^2(y_{n+1}+y_{n})=\beta+(2n+1+\gamma)t^2+(t^3-\alpha_{n}t^2+\alpha_{n}-1)R_{n}+(t^2\alpha_{n}-t^3)x_{n}.
\ee
Eliminating $y_{n+1}+y_{n}$ from (\ref{s12}) and (\ref{s13}), we have
\be\label{s14}
r_{n+1}+r_{n}=(1-\alpha_n)R_n-\beta.
\ee
Similarly, eliminating $r_{n+1}+r_{n}$ from (\ref{s12}) and (\ref{s13}) gives
$$
y_{n+1}+y_{n}+(\alpha_{n}-t)x_{n}+(t-1)R_n+\beta+\gamma+2n+1=0.
$$
Using (\ref{s11}), it follows that
\be\label{s15}
y_{n+1}+y_{n}=\alpha-\alpha_{n}x_{n}.
\ee

Similarly, substituting (\ref{anz1}) and (\ref{bnz1}) into ($S_{2}$), we have the following equations by comparing the coefficients of $\frac{1}{z}$, $\frac{1}{z^2}$ and $\frac{1}{z^3}$:
\be\label{s21}
\alpha_{n}=t-(t-1)(r_{n+1}-r_{n})+t(y_{n+1}-y_{n}),
\ee
\bea\label{s22}
&&(t-1)(\beta_{n+1}R_{n+1}-\beta_{n}R_{n-1})-t(\beta_{n+1}x_{n+1}-\beta_{n}x_{n-1})\nonumber\\
&=&(t^2-\alpha_{n}t+\alpha_{n}-1)(r_{n+1}-r_{n})-(t^2-t\alpha_{n})(y_{n+1}-y_{n})+t\alpha_{n}-t^2,
\eea
\bea\label{s23}
&&(t^2-1)(\beta_{n+1}R_{n+1}-\beta_{n}R_{n-1})-t^2(\beta_{n+1}x_{n+1}-\beta_{n}x_{n-1})\nonumber\\
&=&(t^3-\alpha_{n}t^2+\alpha_{n}-1)(r_{n+1}-r_{n})-(t^3-t^2\alpha_{n})(y_{n+1}-y_{n})+t^2\alpha_{n}-t^3.
\eea
In view of (\ref{sum}), a telescopic sum of (\ref{s21}) gives
\be\label{s24}
\mathrm{p}(n,t)=(t-1)r_{n}-ty_{n}-nt.
\ee

From the above results, we have three important equations in the following proposition.
\begin{proposition}
\be\label{s26}
\beta_{n}R_{n}R_{n-1}=r_{n}^2+\beta r_{n},
\ee
\be\label{s27}
\beta_{n}x_{n}x_{n-1}=y_{n}^2-\alpha y_{n},
\ee
\be\label{s210}
\beta_{n}(R_{n}x_{n-1}+x_{n}R_{n-1})=2r_{n}y_{n}+(2n+\beta+\gamma)r_{n}-(2n+\alpha+\gamma)y_{n}-n(n+\gamma).
\ee
\end{proposition}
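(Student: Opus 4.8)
The plan is to obtain all three identities from the third compatibility condition $(S_2')$ of Lemma \ref{s1s2},
\[
B_n^2(z)+\mathrm{v}_0'(z)B_n(z)+\sum_{j=0}^{n-1}A_j(z)=\beta_n A_n(z)A_{n-1}(z),
\]
by analysing both sides near the endpoint singularities $z=0$ and $z=1$ (and, for (\ref{s210}), also as $z\to\infty$). Since $\alpha,\beta>0$, the functions $a_n(z,t)$ and $b_n(z,t)$ in (\ref{anz})--(\ref{bnz}) are analytic at $z=0$ and $z=1$; hence there $A_n$ and $B_n$ are meromorphic with at most simple poles, with residues $x_n,-R_n$ at $z=0,1$ for $A_n$ and $y_n,-r_n$ for $B_n$, while $\mathrm{v}_0'(z)=-\alpha/z+\beta/(1-z)$ has a simple pole with residue $-\alpha$ at $z=0$ and $-\beta$ at $z=1$, and $\sum_{j=0}^{n-1}A_j(z)$ has only a simple pole at each endpoint.

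First I would get (\ref{s27}) and (\ref{s26}): multiplying $(S_2')$ by $z^2$ and letting $z\to 0$, only the double poles of $B_n^2$ and of $\mathrm{v}_0'B_n$ contribute, and comparison of the $z^{-2}$ coefficients gives $y_n^2-\alpha y_n=\beta_n x_n x_{n-1}$, which is (\ref{s27}); the same step at $z=1$ (multiply by $(z-1)^2$, let $z\to1$) yields $r_n^2+\beta r_n=\beta_n R_n R_{n-1}$, i.e.\ (\ref{s26}).

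For (\ref{s210}) I would compare the $z^{-1}$ coefficients of $(S_2')$ at $z=0$. This produces a relation
\[
2y_n(r_n+b_n(0,t))-\alpha(r_n+b_n(0,t))+\beta y_n+\sum_{j=0}^{n-1}x_j=\beta_n[\,x_nR_{n-1}+x_{n-1}R_n+x_n a_{n-1}(0,t)+x_{n-1}a_n(0,t)\,],
\]
in which the wanted combination $\beta_n(R_nx_{n-1}+x_nR_{n-1})$ appears, but together with the ``constant parts'' $a_n(0,t),b_n(0,t)$ and the partial sum $\sum_{j=0}^{n-1}x_j$. The constants are evaluated by the partial fraction $\frac1{y(y-t)}=\frac1t\big(\frac1{y-t}-\frac1y\big)$ followed by (\ref{imp1}), (\ref{imp2}) and the definitions of $x_n,y_n$, giving $a_n(0,t)=\frac1{\alpha t}[(\alpha+\gamma)x_n-\alpha R_n]$ and $b_n(0,t)=\frac1{\alpha t}[(\alpha+\gamma)y_n+\alpha n-\alpha r_n]$. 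Substituting these, and using (\ref{s27}) to replace the resulting factors $\beta_n x_n x_{n-1}$, turns the relation into one expressing $t\sum_{j=0}^{n-1}x_j$ in terms of $\beta_n(R_nx_{n-1}+x_nR_{n-1})$ and of $r_n,y_n,n$. It remains to eliminate the partial sum. For this I would read off the $z^{-3}$ coefficient of $(S_2')$ from the large-$z$ expansions (\ref{anz1})--(\ref{bnz1}): the right-hand side is $O(z^{-4})$, so this coefficient vanishes, and after using (\ref{s11}) for each $j$ to simplify $\sum_{j=0}^{n-1}A_j$ and (\ref{sum}) for $\sum_{j=0}^{n-1}\alpha_j$, one obtains a closed form for $(t-1)\sum_{j=0}^{n-1}R_j$ in terms of $\mathrm{p}(n,t)$. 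Combining this with the sum of (\ref{s11}) over $j$, namely $t\sum_{j=0}^{n-1}x_j-(t-1)\sum_{j=0}^{n-1}R_j=n(\alpha+\beta+\gamma)+n^2$, and with (\ref{s24}), eliminates both sums; collecting terms then shows the identity carries an overall factor $t-1$, and cancelling it produces exactly (\ref{s210}).

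The main obstacle is this final algebraic reconciliation: one must carry through the $1/t$'s coming from $a_n(0,t),b_n(0,t)$ and the $(t-1)$'s coming from the elimination of the sums, and verify that they cancel so that (\ref{s210}) emerges genuinely $t$-free. A secondary point is to justify the Laurent expansions at $z=0,1$, which is exactly where $\alpha,\beta>0$ is used (convergence of the integrals defining $a_n,b_n$ there). An alternative to the $z\to\infty$ step is to compare the $z^{-1}$ coefficients of $(S_2')$ at $z=1$ as well, giving the analogue of the displayed relation with $a_n(1,t)=\frac{(\beta+\gamma)R_n-\beta x_n}{\beta(1-t)}$, $b_n(1,t)=\frac{(\beta+\gamma)r_n-\beta y_n-\beta n}{\beta(1-t)}$ and $\sum_{j=0}^{n-1}R_j$, and then eliminating the two partial sums between the two residue relations and (\ref{s11}).
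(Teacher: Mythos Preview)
Your approach is correct but differs from the paper's. You extract (\ref{s26})--(\ref{s210}) directly from $(S_2')$ by matching the most singular terms at $z=0$, $z=1$ and $z=\infty$; the paper instead avoids $(S_2')$ for this proposition and derives the three identities purely from the $(S_1)$ and $(S_2)$ relations (\ref{s14}), (\ref{s15}), (\ref{s21})--(\ref{s23}), multiplying suitable differences by $R_n$, $x_n$, or $R_n-x_n$ and telescoping in $n$. The authors even state explicitly (Remark~2) that the direct-$(S_2')$ route, which Dai--Zhang used when $t<0$ and $A_n,B_n$ are rational in $z$, is not available here and that ``in our case, we have to obtain them from the combination of $(S_1)$ and $(S_2)$''; your argument shows this ``have to'' is overstated, since the residue route can still be pushed through, at the price of computing the endpoint values $a_n(0,t),b_n(0,t)$ and eliminating the partial sum via (\ref{s11}), (\ref{s24}) and the $z^{-3}$ coefficient of $(S_2')$ at infinity. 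For (\ref{s26}) and (\ref{s27}) your route is the quicker one (each falls out in a single line from the double-pole coefficient); for (\ref{s210}) the paper's telescoping is cleaner, producing the $t$-free identity directly rather than via a cancellation of $1/t$ and $1/(t-1)$ factors. One technical correction: for generic non-integer $\alpha,\beta$, the Cauchy-type integrals $a_n(z,t),b_n(z,t)$ are only continuous, not analytic, at $z=0,1$; however your argument requires only the existence of the finite boundary values $a_n(0,t),b_n(0,t)$ (which you compute correctly from (\ref{imp1})--(\ref{imp2})), so this does not affect the validity of the proof.
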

\begin{proof}
We start from eliminating $\beta_{n+1}x_{n+1}-\beta_{n}x_{n-1}$ from (\ref{s22}) and (\ref{s23}) to obtain
\be\label{s25}
\beta_{n+1}R_{n+1}-\beta_{n}R_{n-1}=(1-\alpha_{n})(r_{n+1}-r_{n}).
\ee
Multiplying both sides of (\ref{s25}) by $R_{n}$ produces
$$
\beta_{n+1}R_{n+1}R_{n}-\beta_{n}R_{n}R_{n-1}=(1-\alpha_{n})R_{n}(r_{n+1}-r_{n}).
$$
Using (\ref{s14}), the above becomes
\be\label{eqs2}
\beta_{n+1}R_{n+1}R_{n}-\beta_{n}R_{n}R_{n-1}=r_{n+1}^2-r_{n}^2+\beta(r_{n+1}-r_{n}).
\ee
Noting that $\beta_{0}R_{0}(t)R_{-1}(t)=0$ and $r_{0}(t)=0$, a telescopic sum of (\ref{eqs2}) gives (\ref{s26}).

To proceed, we substitute (\ref{s25}) into (\ref{s22}) and obtain
$$
\beta_{n+1}x_{n+1}-\beta_{n}x_{n-1}=-(t-1)(r_{n+1}-r_{n})+(t-\alpha_{n})(y_{n+1}-y_{n})+t-\alpha_{n}.
$$
Using (\ref{s21}), the above becomes
$$
\beta_{n+1}x_{n+1}-\beta_{n}x_{n-1}=-\alpha_{n}(y_{n+1}-y_{n}).
$$
Multiplying both sides by $x_{n}$ gives
$$
\beta_{n+1}x_{n+1}x_{n}-\beta_{n}x_{n}x_{n-1}=-\alpha_{n}x_{n}(y_{n+1}-y_{n}).
$$
It follows from (\ref{s15}) that
$$
\beta_{n+1}x_{n+1}x_{n}-\beta_{n}x_{n}x_{n-1}=y_{n+1}^2-y_{n}^2-\alpha(y_{n+1}-y_{n}).
$$
A telescopic sum produces (\ref{s27}).

To derive (\ref{s210}), we subtract (\ref{s22}) from (\ref{s23}) to obtain
$$
\beta_{n+1}(R_{n+1}-x_{n+1})-\beta_{n}(R_{n-1}-x_{n-1})=(t-\alpha_{n})(r_{n+1}-r_n-y_{n+1}+y_{n}-1).
$$
Multiplying both sides by $R_{n}-x_{n}$ shows that
\be\label{s28}
\beta_{n+1}(R_{n+1}-x_{n+1})(R_{n}-x_{n})-\beta_{n}(R_{n}-x_{n})(R_{n-1}-x_{n-1})=(t-\alpha_{n})(R_{n}-x_{n})(r_{n+1}-r_n-y_{n+1}+y_{n}-1).
\ee
The difference of (\ref{s13}) and (\ref{s12}) gives
$$
(t-\alpha_{n})(R_{n}-x_{n})=r_{n+1}+r_n-y_{n+1}-y_{n}-2n-1-\gamma.
$$
Inserting it into (\ref{s28}), we find
\bea
&&\beta_{n+1}(R_{n+1}-x_{n+1})(R_{n}-x_{n})-\beta_{n}(R_{n}-x_{n})(R_{n-1}-x_{n-1})\nonumber\\
&=&(r_{n+1}-y_{n+1}-n-1-\gamma)(r_{n+1}-y_{n+1}-n-1)-(r_{n}-y_{n}-n-\gamma)(r_{n}-y_{n}-n).\nonumber
\eea
A telescopic sum produces
\be\label{s29}
\beta_{n}(R_{n}-x_{n})(R_{n-1}-x_{n-1})=(r_{n}-y_{n}-n-\gamma)(r_{n}-y_{n}-n).
\ee
Substituting (\ref{s26}) and (\ref{s27}) into (\ref{s29}), we obtain (\ref{s210}).
\end{proof}
\noindent $\mathbf{Remark\: 2.}$ The three important equations (\ref{s26}), (\ref{s27}) and (\ref{s210}) were derived from ($S_{2}'$) directly in \cite{Dai}. But in our case, we have to obtain them from the combination of ($S_{1}$) and ($S_{2}$).

Finally, substituting (\ref{anz1}) and (\ref{bnz1}) into ($S_{2}'$) and comparing the coefficients of $\frac{1}{z^2}$ on both sides, we get (\ref{s11}). Comparing the coefficients of $\frac{1}{z^3}$, we find the following equation,
\be\label{s31}
n\beta+n(2n+\alpha+\beta+\gamma)t-(2n+\alpha+\beta)(t-1)r_n+(2n+\alpha+\beta)t y_{n}+\gamma\sum_{j=0}^{n-1}\alpha_{j}+(t^2-1)\sum_{j=0}^{n-1}R_{j}-t^2\sum_{j=0}^{n-1}x_{j}=0.
\ee
Using (\ref{sum}) and (\ref{s24}), we have
$$
\sum_{j=0}^{n-1}\alpha_{j}=-\mathrm{p}(n,t)=-(t-1)r_{n}+ty_{n}+nt.
$$
Plugging it into (\ref{s31}) and replacing $x_{j}$ by $R_j$ from (\ref{s11}), we obtain
\be\label{s32}
n\beta+n(n+\gamma) t-(2n+\alpha+\beta+\gamma)(t-1)r_n+(2n+\alpha+\beta+\gamma)ty_{n}+(t-1)\sum_{j=0}^{n-1}R_{j}=0.
\ee

The equations obtained from ($S_{1}$), ($S_{2}$) and ($S_{2}'$) will play a significant role in the derivation of the Painlev\'{e} VI equation satisfied by the logarithmic derivative of the Hankel determinant $D_n(t)$ in the next section.

\section{Painlev\'{e} VI and its $\sigma$-Form}
In this section, we devote our efforts to deriving the second-order ordinary differential equation satisfied by the logarithmic derivative of the Hankel determinant. After suitable transformation, the relation of our Hankel determinant with the Painlev\'{e} equation will be established.
We begin with taking a derivative with respect to $t$ in the following equation
$$
\int_{0}^{1}P_{n}^2(x,t)x^{\alpha}(1-x)^{\beta}|x-t|^{\gamma}(A+B\theta(x-t))dx=h_{n}(t),\;\;n=0,1,2,\ldots,
$$
which produces
$$
h_{n}'(t)=-\gamma\int_{0}^{1}\frac{P_{n}^2(x)w(x)}{x-t}dx.
$$
It follows from (\ref{imp1}) that
\be\label{d1}
t\frac{d}{dt}\ln h_{n}(t)=tx_{n}-tR_n=\alpha+\beta+\gamma+2n+1-R_n,
\ee
where we have used (\ref{s11}) in the second equality.\\
In view of (\ref{be}), we have
$$
t\frac{d}{dt}\ln \beta_{n}(t)=2-R_{n}+R_{n-1}.
$$
Then,
\be\label{d11}
t\beta_{n}'(t)=\beta_{n}(2-R_{n}+R_{n-1}).
\ee
We now define a quantity related to the logarithmic derivative of the Hankel determinant,
$$
H_{n}(t):=t(t-1)\frac{d}{dt}\ln D_{n}(t).
$$
From (\ref{hankel}) and (\ref{d1}) we have
\be\label{hn}
H_{n}(t)=n(n+\alpha+\beta+\gamma)(t-1)-(t-1)\sum_{j=0}^{n-1}R_j(t).
\ee

On the other hand, taking a derivative with respect to $t$ in the equation
$$
\int_{0}^{1}P_{n}(x,t)P_{n-1}(x,t)x^{\alpha}(1-x)^{\beta}|x-t|^{\gamma}(A+B\theta(x-t))dx=0,\;\;n=0,1,2,\ldots,
$$
we obtain
\be\label{d2}
\frac{d}{dt}\mathrm{p}(n,t)=r_{n}-y_{n}-n.
\ee
It follows that
\be\label{alphap}
t\alpha_{n}'(t)=\alpha_{n}+r_n-r_{n+1},
\ee
where use has been made of (\ref{al}) and (\ref{s21}).\\
In addition, substituting (\ref{s24}) into (\ref{d2}) gives the following important relation,
\be\label{re}
(t-1)r_n'(t)=ty_{n}'(t).
\ee
\begin{proposition}\label{albe}
The recurrence coefficient $\alpha_{n}$ has the following expression in terms of $r_{n}, y_{n}$ and $R_{n}$,
\be\label{alpha}
(\alpha+\beta+\gamma+2n+2)\alpha_{n}=2(t-1)r_{n}-2ty_{n}-(t-1)R_{n}+t\alpha+(t-1)\beta+t,
\ee
while $\beta_{n}$ has the expression in terms of $r_{n}$ and $y_{n}$,
\bea\label{beta}
&&(\alpha+\beta+\gamma+2n+1)(\alpha+\beta+\gamma+2n-1)\beta_{n}\nonumber\\
&=&\left[ty_{n}-(t-1)r_{n}\right]^{2}-(t-1)(2nt+\gamma t+\beta)r_{n}+t\left[(t-1)(2n+\gamma)-\alpha\right]y_{n}+n(n+\gamma)(t^2-t).\nonumber\\
\eea
\end{proposition}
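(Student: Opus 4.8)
The plan is to obtain \eqref{alpha} from the single-index reductions of $(S_{1})$ and $(S_{2})$ derived above, and \eqref{beta} from the three product identities \eqref{s26}, \eqref{s27}, \eqref{s210} together with \eqref{s11}.

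For \eqref{alpha} I would start from \eqref{s21}, which expresses $\alpha_{n}$ through the differences $r_{n+1}-r_{n}$ and $y_{n+1}-y_{n}$. Using \eqref{s14} to write $r_{n+1}-r_{n}=(1-\alpha_{n})R_{n}-\beta-2r_{n}$ and \eqref{s15} to write $y_{n+1}-y_{n}=\alpha-\alpha_{n}x_{n}-2y_{n}$, I substitute both into \eqref{s21} and collect the terms proportional to $\alpha_{n}$. The coefficient of $\alpha_{n}$ then becomes $1-(t-1)R_{n}+tx_{n}$, which by \eqref{s11} equals $\alpha+\beta+\gamma+2n+2$; the remaining terms reassemble into the right-hand side of \eqref{alpha}. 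This step is short and essentially forced.

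For \eqref{beta} the strategy is to eliminate $x_{n},x_{n-1}$ first and then the cross product $\beta_{n}R_{n}R_{n-1}$. Applying \eqref{s11} at levels $n$ and $n-1$ gives $tx_{n}=(t-1)R_{n}+c_{n}$ and $tx_{n-1}=(t-1)R_{n-1}+c_{n-1}$ with $c_{n}:=\alpha+\beta+\gamma+2n+1$ and $c_{n-1}=c_{n}-2$. Substituting these into \eqref{s27} and \eqref{s210}, clearing the factors of $t$, expanding the products $[(t-1)R_{n}+c_{n}][(t-1)R_{n-1}+c_{n-1}]$, and replacing every occurrence of $\beta_{n}R_{n}R_{n-1}$ by $r_{n}^{2}+\beta r_{n}$ via \eqref{s26}, one is left with two relations whose only unknowns are $\beta_{n}R_{n}$, $\beta_{n}R_{n-1}$ and $\beta_{n}$. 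The key observation is that the combination $c_{n-1}(\beta_{n}R_{n})+c_{n}(\beta_{n}R_{n-1})$ occurs in both relations, once carrying an extra factor $(t-1)$, so a single subtraction removes $\beta_{n}R_{n}$ and $\beta_{n}R_{n-1}$ simultaneously and yields $c_{n}c_{n-1}\beta_{n}=t^{2}(y_{n}^{2}-\alpha y_{n})+(t-1)^{2}(r_{n}^{2}+\beta r_{n})-t(t-1)\,S_{n}$, where $S_{n}$ denotes the right-hand side of \eqref{s210}. Since $c_{n}c_{n-1}=(\alpha+\beta+\gamma+2n+1)(\alpha+\beta+\gamma+2n-1)$, it remains to expand the right-hand side: the degree-two part in $r_{n},y_{n}$ collapses to the perfect square $[ty_{n}-(t-1)r_{n}]^{2}$, and collecting the terms linear in $r_{n}$, linear in $y_{n}$, and the constant term produces exactly \eqref{beta}.

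The main obstacle is the bookkeeping in the derivation of \eqref{beta}: one must carry the products $[(t-1)R_{n}+c_{n}][(t-1)R_{n-1}+c_{n-1}]$ through \eqref{s27} and \eqref{s210} accurately so that, after clearing denominators, the coefficients of $\beta_{n}R_{n}$ and $\beta_{n}R_{n-1}$ in the two relations are genuinely proportional (by the factor $t-1$); once this alignment is recognized the elimination is immediate, and the final regrouping into the stated form is routine polynomial algebra in $t$.
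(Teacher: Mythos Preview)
Your proposal is correct and follows essentially the same route as the paper. For \eqref{alpha} both you and the paper substitute \eqref{s14}, \eqref{s15} into \eqref{s21} and then use \eqref{s11}; for \eqref{beta} the paper multiplies \eqref{s27} by $t^{2}$ and \eqref{s210} by $t$, uses \eqref{s11} to replace $tx_{n},tx_{n-1}$, and applies \eqref{s26}, arriving at exactly your two relations for the combination $c_{n-1}\beta_{n}R_{n}+c_{n}\beta_{n}R_{n-1}$ (these are the paper's equations (\ref{eq3}) and (\ref{eq4})), after which the elimination and the final regrouping proceed just as you describe.
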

\begin{proof}
From (\ref{s14}) and (\ref{s15}) we have
\be\label{eq1}
r_{n+1}=(1-\alpha_n)R_n-\beta-r_{n}
\ee
and
\be\label{eq2}
y_{n+1}=\alpha-\alpha_{n}x_{n}-y_{n}.
\ee
Substituting (\ref{eq1}) and (\ref{eq2}) into (\ref{s21}), and using (\ref{s11}) to eliminate $x_{n}$, we obtain (\ref{alpha}).

Multiplying $(\ref{s27})$ by $t^2$ on both sides, and using (\ref{s11}) to eliminate $x_{n}$ and $x_{n-1}$, together with the aid of (\ref{s26}), we obtain
\bea\label{eq3}
&&(t-1)\left[(\alpha+\beta+\gamma+2n-1)\beta_n R_n+(\alpha+\beta+\gamma+2n+1)\beta_n R_{n-1}\right]\nonumber\\
&=&t^2(y_{n}^2-\alpha y_{n})-(t-1)^2(r_n^2+\beta r_n)-(\alpha+\beta+\gamma+2n+1)(\alpha+\beta+\gamma+2n-1)\beta_n.
\eea
Similarly, from (\ref{s210}) we have
\bea\label{eq4}
&&(\alpha+\beta+\gamma+2n-1)\beta_n R_n+(\alpha+\beta+\gamma+2n+1)\beta_n R_{n-1}\nonumber\\
&=&(2n+\beta+\gamma)t r_n-(2n+\alpha+\gamma)ty_{n}+2tr_{n}y_{n}-2(t-1)(r_n^2+\beta r_n)-n(n+\gamma)t.
\eea
Substituting (\ref{eq4}) into (\ref{eq3}) we obtain (\ref{beta}).
\end{proof}

\begin{proposition}\label{rrs}
The auxiliary quantities $r_n(t)$ and $y_{n}(t)$ can be expressed in terms of $H_n(t)$ and $H_n'(t)$ as follows,
$$
r_n(t)=\frac{n^2+n\alpha+n\gamma+H_n(t)-tH_n'(t)}{2n+\alpha+\beta+\gamma},
$$
$$
y_{n}(t)=-\frac{n^2+n\beta+n\gamma-H_n(t)+(t-1)H_n'(t)}{2n+\alpha+\beta+\gamma}.
$$
\end{proposition}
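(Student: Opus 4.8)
The plan is to extract a $2\times 2$ system of linear equations for $r_n(t)$ and $y_n(t)$ with coefficients built from $H_n(t)$ and $H_n'(t)$, and then solve it. First I would eliminate the partial sum $\sum_{j=0}^{n-1}R_j$ between (\ref{hn}) and (\ref{s32}): equation (\ref{hn}) gives $(t-1)\sum_{j=0}^{n-1}R_j = n(n+\alpha+\beta+\gamma)(t-1)-H_n(t)$, and feeding this into (\ref{s32}) leaves a single identity of the shape
\[
(2n+\alpha+\beta+\gamma)\big[(t-1)r_n-ty_n\big]=n\beta+n(n+\gamma)t+n(n+\alpha+\beta+\gamma)(t-1)-H_n(t).
\]
This is the first equation of the system. (By (\ref{s24}) its left-hand side equals $(2n+\alpha+\beta+\gamma)(\mathrm{p}(n,t)+nt)$, which incidentally shows that $\mathrm{p}(n,t)$ is itself a rational function of $H_n$, $t$ and the parameters.)

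The second equation I would obtain by differentiating the displayed relation with respect to $t$. On the left one gets $(2n+\alpha+\beta+\gamma)\big[r_n-y_n+(t-1)r_n'-ty_n'\big]$, and here the term $(t-1)r_n'-ty_n'$ vanishes identically by (\ref{re}); on the right the derivative is $n(n+\gamma)+n(n+\alpha+\beta+\gamma)-H_n'(t)=n(2n+\alpha+\beta+2\gamma)-H_n'(t)$. Hence
\[
(2n+\alpha+\beta+\gamma)(r_n-y_n)=n(2n+\alpha+\beta+\gamma)+n\gamma-H_n'(t).
\]
(The same identity can be reached by differentiating (\ref{s24}) and comparing with (\ref{d2}), which yields $r_n-y_n-n=\tfrac{d}{dt}\mathrm{p}(n,t)$.)

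Finally I would solve these two equations. Writing $s:=2n+\alpha+\beta+\gamma$, the second equation gives $y_n=r_n-(ns+n\gamma-H_n')/s$; substituting this into the first, the coefficient of $r_n$ reduces to $(t-1)-t=-1$, so $r_n$ is obtained at once, and a direct check shows that every explicitly $t$-dependent term on the right-hand side cancels except $-tH_n'$ and $+H_n$, producing
\[
r_n(t)=\frac{n^2+n\alpha+n\gamma+H_n(t)-tH_n'(t)}{2n+\alpha+\beta+\gamma}.
\]
Then $y_n$ follows from the second equation; using $n^2+n\alpha-ns=-(n^2+n\beta+n\gamma)$ and $(1-t)H_n'=-(t-1)H_n'$ gives the asserted formula for $y_n(t)$.

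Conceptually there is no real obstacle — the whole argument is linear algebra — but the work lies in the bookkeeping: one must carefully track the several $n,\alpha,\beta,\gamma$-dependent constants through the elimination and make sure that (\ref{re}) is invoked at exactly the moment of differentiation so that the $r_n',y_n'$ terms drop out. That cancellation, together with the disappearance of all bare $t$-terms in the final step, is the one place where a sign slip would be fatal, so I would verify it twice.
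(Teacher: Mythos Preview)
Your proposal is correct and follows exactly the same route as the paper: combine (\ref{hn}) with (\ref{s32}) to obtain the linear relation $(2n+\alpha+\beta+\gamma)[(t-1)r_n-ty_n]=n(2n+\alpha+\beta+2\gamma)t-n(n+\alpha+\gamma)-H_n(t)$ (your right-hand side simplifies to this), differentiate and use (\ref{re}) to kill $(t-1)r_n'-ty_n'$, and then solve the resulting $2\times 2$ linear system. The only cosmetic difference is that the paper records the simplified right-hand side directly as (\ref{eq5}) before differentiating.
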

\begin{proof}
From (\ref{s32}) and (\ref{hn}) we have
\be\label{eq5}
(2n+\alpha+\beta+\gamma)\left[(t-1)r_n-ty_{n}\right]=n\left(2n+\alpha+\beta+2\gamma\right)t-n(n+\alpha+\gamma)-H_n(t).
\ee
Taking a derivative with respect to $t$ and using (\ref{re}), we obtain
\be\label{eq6}
(2n+\alpha +\beta +\gamma)(r_n-y_{n})=n (2n+\alpha +\beta +2\gamma)-H_n'(t).
\ee
The combination of (\ref{eq5}) and (\ref{eq6}) gives the desired result.
\end{proof}
From the above results, we are now able to prove the following theorem.
\begin{theorem}\label{thm}
The quantity $H_n(t)$ satisfies the following nonlinear second-order ordinary differential equation,
\bea\label{sod}
&&t^2(t-1)^2 (H_n'')^2+4t (t-1) (H_n')^3-\Big\{4 (2 t-1) H_n+\left[4 n^2+4 n (\alpha +\beta +\gamma )+(\alpha +\beta )^2\right]t^2\nonumber\\
&-&2\left[2 n^2+2 n (\alpha +\beta +\gamma)+\alpha(\alpha+\gamma)+\beta(\alpha-\gamma)\right]t+(\alpha+\gamma)^2\Big\}(H_n')^2\nonumber\\
&+&2\Big\{2H_n^2+\left[\left(4 n^2+4 n (\alpha +\beta +\gamma )+(\alpha +\beta )^2\right)t-2 n^2-2 n (\alpha +\beta +\gamma )-\alpha(\alpha+\gamma)-\beta(\alpha-\gamma)\right]H_n\nonumber\\
&-&n\gamma(n+\alpha +\beta +\gamma)  [(\alpha-\beta) t-\alpha-\gamma]\Big\}H_n'- \left[4 n^2+4 n (\alpha +\beta +\gamma )+(\alpha +\beta )^2\right]H_n^2\nonumber\\
&+&2 n\gamma(\alpha -\beta )(n+\alpha +\beta +\gamma)  H_n- n^2\gamma^2(n+\alpha +\beta +\gamma)^2=0.
\eea
\end{theorem}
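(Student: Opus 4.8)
The plan is to eliminate all the auxiliary quantities in favor of $H_n$, $H_n'$ and $H_n''$, and then verify that the resulting relation is precisely the stated ODE. Propositions \ref{albe} and \ref{rrs} already give us most of what we need: $r_n$ and $y_n$ are explicit rational functions of $H_n, H_n', t$ (and the parameters), while $\alpha_n$ and $\beta_n$ are expressed through $r_n, y_n$ and $R_n$. So the first step is to express $R_n$ in terms of $H_n$ and $H_n'$ as well. I would obtain this by differentiating (\ref{hn}), which gives $H_n'(t) = n(n+\alpha+\beta+\gamma) - \sum_{j=0}^{n-1}R_j - (t-1)R_{n}'\cdot 0$ — more carefully, from $H_n(t) = n(n+\alpha+\beta+\gamma)(t-1) - (t-1)\sum_{j=0}^{n-1}R_j$ one differentiates to relate $\sum R_j$ and $R_{n}$ is not directly there, so instead I would use (\ref{eq5})–(\ref{eq6}) together with (\ref{s14}), (\ref{s15}) and (\ref{alpha}): equation (\ref{alpha}) already contains $R_n$ linearly alongside $r_n, y_n, \alpha_n$, and (\ref{alphap}) gives $\alpha_n$ in terms of $r_n, r_{n+1}$, while (\ref{s14}) gives $r_{n+1}$ in terms of $R_n, r_n, \alpha_n$. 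Combining these yields $\alpha_n$ and then $R_n$ as rational expressions in $r_n, y_n$ (hence in $H_n, H_n'$).

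The second step is the elimination itself. Once $r_n, y_n, R_n$ and $\alpha_n$ are all written in terms of $H_n, H_n', t$, I substitute into the $\beta_n$-formula (\ref{beta}) to get one expression for $\beta_n$, call it $\Phi_1(H_n, H_n', t)$. To get a second, independent handle on $\beta_n$ I would use the $t$-evolution: differentiating $r_n$ (from Proposition \ref{rrs}) brings in $H_n''$, and equation (\ref{d11}), namely $t\beta_n' = \beta_n(2 - R_n + R_{n-1})$, or alternatively (\ref{re}) combined with the expressions for $r_n'$, $y_n'$, must be made consistent. Concretely, I expect the cleanest route is: differentiate the formula for $r_n$ in Proposition \ref{rrs} with respect to $t$ to get $r_n'$ in terms of $H_n, H_n', H_n''$, then use (\ref{re}) to get $y_n'$, and then feed these into one of the compatibility relations that contains $r_n', y_n'$ — e.g. take the $t$-derivative of (\ref{s26}) or (\ref{s27}) (using (\ref{d11}) for $\beta_n'$, and (\ref{s14}) or (\ref{s15}) for $R_{n-1}$, $x_{n-1}$). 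Setting the two expressions for $\beta_n$ equal (or more precisely, forcing consistency of the differentiated relation) produces a single equation in $H_n, H_n', H_n''$ and $t$. Clearing denominators — the common denominator will be a power of $(2n+\alpha+\beta+\gamma)$ together with $t^2(t-1)^2$ coming from the derivatives — should yield (\ref{sod}) after expansion.

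The main obstacle is the bookkeeping in this elimination: the intermediate rational expressions have denominators $2n+\alpha+\beta+\gamma$, $2n+\alpha+\beta+\gamma\pm1$, $2n+\alpha+\beta+\gamma\pm2$ and powers of $t(t-1)$, so one must be careful that spurious factors cancel and that the final polynomial identity has exactly the degree structure shown (quadratic in $H_n''$, cubic in $H_n'$, with the precise $t$-dependent coefficients). A secondary subtlety is handling the boundary/initial data ($r_0 = 0$, $\beta_0 R_0 R_{-1} = 0$, $\mathrm{p}(0,t)=0$) consistently, though these were already used in the telescopic sums of the previous section and do not reappear here. In practice I would organize the computation by first proving an auxiliary lemma that records $R_n$, $\alpha_n$ and $\beta_n$ purely in terms of $H_n$, $H_n'$ and $t$, and then state that substituting these into (\ref{beta}) and into the $t$-derivative of (\ref{s26}) (with $\beta_n'$ from (\ref{d11})) and eliminating $\beta_n$ gives (\ref{sod}); the verification that the two sides agree is then a finite, if lengthy, polynomial check that I would carry out with a computer algebra system and summarize rather than reproduce in full.
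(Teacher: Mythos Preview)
Your overall strategy---eliminate the auxiliary quantities in favour of $H_n$, $H_n'$, $H_n''$---is exactly right, and you correctly identify that $r_n$, $y_n$ are already under control via Proposition~\ref{rrs} and that (\ref{beta}) then handles $\beta_n$. The gap is in your treatment of $R_n$ and $R_{n-1}$.

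Your proposed route to get $R_n$ as a rational expression in $r_n, y_n$ alone does not close. Combining (\ref{alpha}), (\ref{alphap}) and (\ref{s14}) as you suggest gives, after substituting $r_{n+1}=(1-\alpha_n)R_n-\beta-r_n$ into (\ref{alphap}) and using (\ref{alpha}) to replace $R_n$, a \emph{first-order ODE} for $\alpha_n$ (with $r_n, y_n$ as known forcing), not an algebraic relation; so $R_n$ cannot be extracted as a rational function of $r_n, y_n$ this way. Your alternative---differentiating (\ref{s26}) or (\ref{s27})---introduces $R_n'$, $R_{n-1}'$ (or $x_n', x_{n-1}'$), for which no expression is available; and invoking (\ref{s14}) at level $n-1$ to reach $R_{n-1}$ drags in $r_{n-1}$ and $\alpha_{n-1}$, so the system never closes at fixed $n$.

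The paper bypasses this with a small but decisive device: set $X:=\beta_n R_n$ and $Y:=\beta_n R_{n-1}$. Equation (\ref{s26}) gives the product $XY=\beta_n(r_n^2+\beta r_n)$; equation (\ref{d11}) gives the difference $X-Y=2\beta_n-t\beta_n'$; and equation (\ref{eq4}) gives another linear combination $(2n+\alpha+\beta+\gamma-1)X+(2n+\alpha+\beta+\gamma+1)Y$ purely in terms of $r_n, y_n$. The two linear relations determine $X$ and $Y$ in terms of $r_n, y_n, \beta_n, \beta_n'$, and substituting into $XY=\beta_n(r_n^2+\beta r_n)$ yields a single algebraic identity among $r_n, y_n, \beta_n, \beta_n'$ with no stray $R$-quantities at all. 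Now feed in (\ref{beta}) for $\beta_n$, differentiate (\ref{beta}) (using (\ref{re})) for $\beta_n'$, and use Proposition~\ref{rrs} for $r_n, y_n$ and their derivatives; the outcome is (\ref{sod}). This $X,Y$ trick is the missing ingredient in your proposal.
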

\begin{proof}
Define
$$
X:=\beta_{n}R_{n},
$$
$$
Y:=\beta_{n}R_{n-1}.
$$
It follows from (\ref{s26}) that
\be\label{prod}
X\cdot Y=\beta_{n}(r_{n}^2+\beta r_{n}).
\ee
From (\ref{d11}) and (\ref{eq4}) we get two linear equations satisfied by $X$ and $Y$:
$$
X-Y=2\beta_n-t \beta_n',
$$
$$
(2n+\alpha+\beta+\gamma-1)X+(2n+\alpha+\beta+\gamma+1)Y=[2 \beta +(2n-\beta +\gamma)t]r_n-2 (t-1) r_n^2-(2n+\alpha +\gamma)ty_{n}+2tr_ny_{n}-n(n+\gamma)t.
$$
Solving for $X$ and $Y$ and substituting the expressions into (\ref{prod}), we get an equation for $r_n, y_{n}, \beta_n$ and $\beta_n'$. Using (\ref{beta}) and Proposition \ref{rrs}, we obtain the second-order differential equation satisfied by $H_n(t)$.
\end{proof}
From Theorem \ref{thm}, we readily have the following result.
\begin{theorem}\label{thm1}
Let
$$
\sigma_{n}(t):=H_n(t)+c_1t+c_2,
$$
where
$$
c_1=-n(n+\alpha+\beta+\gamma)-\frac{(\alpha+\beta)^2}{4},
$$
$$
c_2=\frac{1}{4}\left[2n(n+\alpha+\beta+\gamma)+(\alpha+\beta)\beta-(\alpha-\beta)\gamma\right].
$$
Then $\sigma_{n}(t)$ satisfies the following Jimbo-Miwa-Okamoto $\sigma$-form of Painlev\'{e} VI \cite{Jimbo1981} (page 446 (C.61)),
$$
\sigma_{n}'\left[t(t-1)\sigma_{n}''\right]^2+\left[2\sigma_{n}'(t\sigma_{n}'-\sigma_{n})-(\sigma_{n}')^2-\nu_1\nu_2\nu_3\nu_4\right]^2
=(\sigma_{n}'+\nu_1^2)(\sigma_{n}'+\nu_2^2)(\sigma_{n}'+\nu_3^2)(\sigma_{n}'+\nu_4^2),
$$
with the parameters
$$
\nu_1=\frac{\alpha+\beta}{2},\;\;\nu_2=\frac{\beta-\alpha}{2},\;\;\nu_3=\frac{2n+\alpha+\beta}{2},\;\;\nu_4=\frac{2n+\alpha+\beta+2\gamma}{2}.
$$
\end{theorem}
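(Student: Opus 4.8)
The plan is to obtain Theorem~\ref{thm1} directly from the second-order ODE~(\ref{sod}) for $H_n(t)$ established in Theorem~\ref{thm}, via the affine change of dependent variable $H_n(t)=\sigma_n(t)-c_1t-c_2$. Because the shift is linear we have $H_n'=\sigma_n'-c_1$ and $H_n''=\sigma_n''$, so substituting into~(\ref{sod}) yields a new second-order second-degree ODE for $\sigma_n$ whose coefficients are polynomials in $t$ and in $\sigma_n,\sigma_n',\sigma_n''$. The claim then reduces to the purely algebraic assertion that, for the two specific constants $c_1,c_2$ in the statement, this transformed equation is exactly the Jimbo--Miwa--Okamoto $\sigma$-form of Painlev\'e VI with the listed $\nu_1,\dots,\nu_4$.

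Before grinding through the substitution it is worth recording why the match is forced and how to set it up so as to avoid any division. Write the $\sigma$-PVI left-hand side from the blocks $\sigma_n'$, $q:=2\sigma_n'(t\sigma_n'-\sigma_n)-(\sigma_n')^2=\sigma_n'\big[(2t-1)\sigma_n'-2\sigma_n\big]$, and $\prod_{i=1}^4(\sigma_n'+\nu_i^2)$. Since $q$ is divisible by $\sigma_n'$ and since $\big[q-\nu_1\nu_2\nu_3\nu_4\big]^2-\prod_{i=1}^4(\sigma_n'+\nu_i^2)$ vanishes at $\sigma_n'=0$ (both sides equal $(\nu_1\nu_2\nu_3\nu_4)^2$ there), the entire difference of the two sides of $\sigma$-PVI has $\sigma_n'$ as a polynomial factor; dividing it out leaves an equation whose top term is precisely $t^2(t-1)^2(\sigma_n'')^2$ and whose cubic term is $\big[(2t-1)^2-1\big](\sigma_n')^3=4t(t-1)(\sigma_n')^3$ --- the leading skeleton of~(\ref{sod}). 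This both confirms that~(\ref{sod}) can only coincide with $\sigma$-PVI after such a shift and pins down the right comparison: verify the polynomial identity ``(\ref{sod}) with $H_n\mapsto\sigma_n-c_1t-c_2$, multiplied through by $\sigma_n'$'' $=$ ``the difference of the two sides of $\sigma$-PVI''.

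The concrete work is then: expand $\prod_{i=1}^4(\sigma_n'+\nu_i^2)$ in terms of the elementary symmetric functions of $\nu_1^2,\dots,\nu_4^2$ and note $\nu_1\nu_2\nu_3\nu_4=\tfrac1{16}(\beta^2-\alpha^2)(2n+\alpha+\beta)(2n+\alpha+\beta+2\gamma)$; perform the substitution $\sigma_n=H_n+c_1t+c_2$ in~(\ref{sod}), expanding the quadratic-in-$H_n$ and cubic-in-$H_n'$ pieces; and collect both sides as polynomials in $t,\sigma_n,\sigma_n',\sigma_n''$ and check equality monomial by monomial. The roles of $c_1=-n(n+\alpha+\beta+\gamma)-\tfrac14(\alpha+\beta)^2$ and $c_2=\tfrac14\big[2n(n+\alpha+\beta+\gamma)+(\alpha+\beta)\beta-(\alpha-\beta)\gamma\big]$ are exactly to cancel the spurious linear-in-$t$ and constant contributions to the $(\sigma_n')^2$-coefficient and to reorganize the $\sigma_n$- and $\sigma_n^2$-coefficients so that they assemble into the symmetric combinations of the $\nu_i^2$ and into $\nu_1\nu_2\nu_3\nu_4$.

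The main obstacle is nothing conceptual but the sheer size and delicacy of the algebra: the coefficients in~(\ref{sod}) are already long, the affine shift roughly squares the number of monomials, and one must carry the four parameters $\alpha,\beta,\gamma,n$ throughout while matching against a four-parameter $\sigma$-PVI. In practice this last verification is a symbolic computation; the observation in the second paragraph is what keeps it under control, by telling us in advance which constant multiple to compare and why the answer must come out. Once the identity is confirmed in the indeterminates, the fact (Theorem~\ref{thm}) that $H_n$ solves~(\ref{sod}) immediately gives that $\sigma_n$ solves the stated $\sigma$-form of Painlev\'e VI.
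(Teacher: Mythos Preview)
Your approach is correct and essentially identical to the paper's own proof, which simply reads: ``Substituting $H_n(t)=\sigma_{n}(t)-c_1t-c_2$ into equation~(\ref{sod}), we obtain the desired result.'' Your additional structural observations (the common factor $\sigma_n'$, matching leading terms) are helpful commentary on why the algebraic identity must hold, but the underlying method is the same direct substitution and verification.
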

\begin{proof}
Substituting $H_n(t)=\sigma_{n}(t)-c_1t-c_2$ into equation (\ref{sod}), we obtain the desired result.
\end{proof}
In the end, we show that the auxiliary quantity $R_{n}(t)$ satisfies a particular painlev\'{e} VI up to a linear transformation.
\begin{theorem}\label{thm2}
Let
\be\label{tr}
S_n(t):=\frac{(t-1)R_n(t)}{2n+\alpha+\beta+\gamma+1}+1.
\ee
Then $S_n(t)$ satisfies the Painlev\'{e} VI equation \cite{Gromak}
\bea
S_n''&=&\frac{1}{2}\left(\frac{1}{S_n}+\frac{1}{S_n-1}+\frac{1}{S_n-t}\right)(S_n')^2-\left(\frac{1}{t}+\frac{1}{t-1}+\frac{1}{S_n-t}\right)S_n'\nonumber\\
&+&\frac{S_n(S_n-1)(S_n-t)}{t^2(t-1)^2}\left(\mu_1+\frac{\mu_2 t}{S_n^2}+\frac{\mu_3(t-1)}{(S_n-1)^2}+\frac{\mu_4 t(t-1)}{(S_n-t)^2}\right),\nonumber
\eea
where
$$
\mu_1=\frac{(2n+\alpha+\beta+\gamma+1)^2}{2},\;\;\mu_2=-\frac{\alpha^2}{2},\;\;\mu_3=\frac{\beta^2}{2},\;\;\mu_4=\frac{1-\gamma^2}{2}.
$$
\end{theorem}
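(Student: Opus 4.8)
The goal is to derive a closed second-order ODE for $R_n(t)$ from the relations of Sections 2 and 3, and then verify that the affine substitution (\ref{tr}) brings it to the displayed Painlev\'{e} VI with the stated $\mu_i$. I would regard $r_n,y_n$ (equivalently, by Proposition \ref{rrs}, $H_n,H_n'$) as the underlying Painlev\'{e} pair and try to eliminate them in favour of $R_n,R_n',R_n''$.

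First I would extract a Riccati-type relation for $R_n$. Differentiating (\ref{alpha}) in $t$, the combination $2(t-1)r_n'-2ty_n'$ drops out by (\ref{re}), so $\alpha_n'$ is a polynomial in $r_n,y_n,R_n,R_n'$; feeding this, (\ref{alpha}) itself, and $r_{n+1}=(1-\alpha_n)R_n-\beta-r_n$ from (\ref{s14}) into the evolution equation (\ref{alphap}) gives one relation, linear in $R_n'$, i.e. $R_n'=\Xi(t,r_n,y_n,R_n)$ with $\Xi$ rational. Next I would produce an evolution equation for $r_n$: the Toda-type identity (\ref{d11}) together with $R_{n-1}=r_n(r_n+\beta)/(\beta_nR_n)$ from (\ref{s26}) reads $t\beta_n'R_n=(2R_n-R_n^2)\beta_n+r_n(r_n+\beta)$, and since $\beta_n$ is the polynomial in $r_n,y_n$ of (\ref{beta}) while (again by (\ref{re})) $\beta_n'$ is linear in $r_n'$, this solves for $r_n'=\psi(t,r_n,y_n,R_n)$, hence $y_n'=\tfrac{t-1}{t}\psi$.

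Then I would differentiate $R_n'=\Xi(t,r_n,y_n,R_n)$, substitute $r_n',y_n'$ from the previous step (and $R_n'=\Xi$ itself), to obtain $R_n''=\Theta(t,r_n,y_n,R_n)$. Now $R_n'=\Xi$ and $R_n''=\Theta$ are two algebraic relations in $\{t,r_n,y_n,R_n,R_n',R_n''\}$; eliminating $r_n$ and $y_n$ between them yields a second-order ODE $\Phi(t,R_n,R_n',R_n'')=0$. Finally I would substitute $R_n=\tfrac{2n+\alpha+\beta+\gamma+1}{t-1}(S_n-1)$ and the induced formulas for $R_n',R_n''$ into $\Phi=0$ and simplify; this should collapse to the stated Painlev\'{e} VI with $\mu_1=\tfrac12(2n+\alpha+\beta+\gamma+1)^2$, $\mu_2=-\tfrac12\alpha^2$, $\mu_3=\tfrac12\beta^2$, $\mu_4=\tfrac12(1-\gamma^2)$. (Alternatively one could trade $r_n,y_n$ for $H_n,H_n'$ via Proposition \ref{rrs} and use the ODE (\ref{sod}) in place of one of the relations above.)

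The main obstacle is computational rather than structural: the shifted quantities $R_{n-1},x_{n-1},r_{n+1},y_{n+1}$ enter asymmetrically, so eliminating them via $\beta_nR_n$, $\beta_nR_{n-1}$ and (\ref{s26})--(\ref{s27}) has to be done carefully, and the final two-variable elimination is bulky; matching the coefficient of $(S_n-t)^{-2}$ to $\mu_4=\tfrac12(1-\gamma^2)$ --- the only $\mu_i$ that is not a perfect square --- needs attention to the Jimbo--Miwa--Okamoto/Gromak sign conventions (it corresponds to $\theta_t=\gamma$). Useful consistency checks are that (\ref{d1}) forces the known behaviour of $R_n$ at $t=0$ and $t=1$, which must match the fixed singularities of PVI, and that $\gamma\to 0$ recovers the case of \cite{ChenZhang} and $A=1,\,B=0$ that of \cite{Dai}. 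A more conceptual but less self-contained alternative is to use Theorem \ref{thm1}: since one checks $S_n=\tfrac12-(\sigma_{n+1}-\sigma_n)/(2n+\alpha+\beta+\gamma+1)$, with $\sigma_n$ and $\sigma_{n+1}$ solving the $\sigma$-PVI for parameter sets related by $\nu_3,\nu_4\mapsto\nu_3+1,\nu_4+1$, $S_n$ is (up to the affine change) a B\"acklund image of a PVI transcendent --- but making that precise requires the B\"acklund theory of $\sigma$-PVI.
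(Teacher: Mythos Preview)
Your main line has a dimensional gap at the elimination step. You propose to eliminate the two auxiliaries $r_n,y_n$ from the two relations $R_n'=\Xi(t,r_n,y_n,R_n)$ and $R_n''=\Theta(t,r_n,y_n,R_n)$. But two equations in two unknowns generically \emph{determine} $r_n,y_n$ as functions of $(t,R_n,R_n',R_n'')$; they do not leave over a residual relation $\Phi(t,R_n,R_n',R_n'')=0$. Put differently, the first-order system you have assembled for $(R_n,r_n,y_n)$ is three-dimensional, and eliminating $r_n,y_n$ from it yields at best a third-order ODE for $R_n$, not a second-order one.

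What is missing is the extra algebraic constraint that cuts the phase space down to two dimensions. In the paper this is supplied by (\ref{eq4}) (equivalently (\ref{s210})): once $\beta_n R_{n-1}$ is replaced by $(r_n^2+\beta r_n)/R_n$ via (\ref{s26}) and $\beta_n$ by the polynomial (\ref{beta}), equation (\ref{eq4}) becomes a purely algebraic relation among $R_n,r_n,y_n$ with no derivatives. Note that (\ref{beta}) itself was obtained by \emph{combining} (\ref{eq3}) and (\ref{eq4}), so by invoking only (\ref{beta}) you have discarded exactly one of these two independent identities. The paper's route is: use your relation $R_n'=\Xi$ (equivalently (\ref{rns})) to express $y_n$ in terms of $r_n,R_n,R_n'$; substitute this into the algebraic constraint (\ref{eq4}) to get a \emph{linear} equation for $r_n$, hence $r_n=f(R_n,R_n')$; then $y_n$ and $\beta_n$ are functions of $R_n,R_n'$ alone; finally insert everything into the Toda relation (\ref{betap}) --- the only place a second derivative appears --- and simplify to Painlev\'e VI after the substitution (\ref{tr}). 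Your parenthetical alternative of trading $r_n,y_n$ for $H_n,H_n'$ and invoking (\ref{sod}) would in fact close the argument, precisely because the derivation of (\ref{sod}) already uses (\ref{eq4}); but the primary scheme as written cannot.
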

\begin{proof}
Eliminating $r_{n+1}$ from (\ref{s14}) and (\ref{alphap}) gives
$$
t\alpha_{n}'(t)=\beta+\alpha_{n}+2r_n-(1-\alpha_{n})R_n.
$$
Substituting (\ref{alpha}) into the above and taking account of (\ref{re}), we obtain the expression of $y_{n}$ in terms of $R_n, R_n'$ and $r_n$,
\bea\label{rns}
y_{n}&=&\frac{1}{2 t R_n}\Big\{t(t-1) R_n'-(t-1) R_n^2+\left[2 (t-1) r_n+(\alpha +\beta +1)t-\alpha -2 \beta -\gamma -2 n-1\right]R_n\nonumber\\
&+& 2( \alpha + \beta + \gamma +2 n+1)r_n+\beta(\alpha +\beta+  \gamma +2  n+1)\Big\}.
\eea
Replacing $\beta_n R_{n-1}$ by $\frac{r_n^2+\beta r_n}{R_{n}}$ in (\ref{eq4}) and making use of (\ref{beta}) together with the aid of (\ref{rns}), we finally obtain a linear equation for $r_n$. Hence we can express $r_n$ in terms of $R_n$ and $R_n'$,
\be\label{rnex}
r_n=f(R_n, R_n'),
\ee
where $f(R_n, R_n')$ is a function that is explicitly known. We do not write it down because it is somewhat long.

On the other hand, using (\ref{s26}), equation (\ref{d11}) can be written as
\be\label{betap}
t\beta_{n}'(t)=\beta_{n}(2-R_{n})+\frac{r_n^2+\beta r_n}{R_{n}}.
\ee
In view of (\ref{beta}), (\ref{rns}) and (\ref{rnex}), we see that $\beta_n$ can also be exclusively expressed in terms of $R_n$ and $R_n'$,
\be\label{beex}
\beta_n=g(R_n, R_n'),
\ee
where $g(R_n, R_n')$ is also a function that is explicitly known.\\
Substituting (\ref{rnex}) and (\ref{beex}) into (\ref{betap}), we obtain a second-order differential equation satisfied by $R_{n}(t)$. Using the transformation (\ref{tr}), we finally get the Painlev\'{e} VI equation satisfied by $S_n(t)$.
\end{proof}

\noindent $\mathbf{Remark\: 3.}$ If $\gamma=0$, the results in Theorem \ref{thm1} and \ref{thm2} are coincident with the ones in \cite{ChenZhang}. In addition, we extend the results in \cite{Dai} to a more general situation.

\section{Double Scaling Analysis and Asymptotics}
It is well known that the standard Jacobi polynomials $P_{n}^{(\alpha,\beta)}$ admit the following important limit relationship \cite{Erdelyi} (page 173 (41))
$$
\lim_{n\rightarrow\infty}n^{-\alpha}P_{n}^{(\alpha,\beta)}\left(1-\frac{z^{2}}{2n^{2}}\right)=\left(\frac{z}{2}\right)^{-\alpha}J_{\alpha}(z),
$$
where $J_{\alpha}(z)$ is the Bessel function of the first kind of order $\alpha$. This motivates us to execute the double scaling, namely, $n\rightarrow\infty, t\rightarrow 0$, such that $s=c\: n^{2}t$ is fixed. Here $c$ is a constant to be suitably chosen later.

We define
$$
\sigma(s):=-\lim\limits_{n\rightarrow\infty}H_{n}\left(\frac{s}{c\:n^{2}}\right).
$$
After the change of variables, equation (\ref{sod}) becomes
\bea\label{ds}
&&s^2 \left(\sigma ''(s)\right)^2+4 s \left(\sigma '(s)\right)^3-\left[(\alpha+\gamma)^{2}-\frac{4s}{c}+4 \sigma (s)\right]\left(\sigma '(s)\right)^2 -\frac{2}{c}\left( \alpha  \gamma + \gamma ^2+2 \sigma (s)\right)\sigma '(s)-\frac{\gamma ^2}{c^2}\nonumber\\
&+&\frac{2(\alpha +\beta +\gamma)}{c\: n}\Big[2s \left(\sigma '(s)\right)^2- \left(\alpha\gamma+\gamma^{2}+2 \sigma (s)\right)\sigma '(s)-\frac{\gamma^{2}}{c}\Big]+O\left(\frac{1}{n^{2}}\right)=0.
\eea
We have obtained the coefficient of the $O\left(\frac{1}{n^{2}}\right)$ term. Since the expression is very long, we do not write it down here.

Let $n\rightarrow\infty$ and keep only the highest order term in (\ref{ds}), we get
$$
s^2 \left(\sigma ''(s)\right)^2+4 s \left(\sigma '(s)\right)^3-\left[(\alpha+\gamma)^{2}-\frac{4s}{c}+4 \sigma (s)\right]\left(\sigma '(s)\right)^2 -\frac{2}{c}\left( \alpha  \gamma + \gamma ^2+2 \sigma (s)\right)\sigma '(s)-\frac{\gamma ^2}{c^2}=0.
$$
By choosing $c=-1$, the above equation becomes
$$
s^2 \left(\sigma ''(s)\right)^2+4 s \left(\sigma '(s)\right)^3-\left[(\alpha+\gamma)^{2}+4s+4 \sigma (s)\right]\left(\sigma '(s)\right)^2 +2\left( \alpha  \gamma + \gamma ^2+2 \sigma (s)\right)\sigma '(s)-\gamma ^2=0,
$$
which is just the Jimbo-Miwa-Okamoto $\sigma$-form of the Painlev\'{e} III.
Hence we prove the following theorem.
\begin{theorem}
Assuming that $n\rightarrow\infty, t\rightarrow 0$, such that $s=-n^{2}t$ is fixed. Then $\sigma(s):=-\lim\limits_{n\rightarrow\infty}H_{n}\left(-\frac{s}{n^{2}}\right)$ satisfies the Jimbo-Miwa-Okamoto $\sigma$-form of the Painlev\'{e} III \cite{Jimbo1982} (see (3.13) in page 1157),
\be\label{p3}
\left(s\sigma''(s)\right)^2=4\sigma'(s)(\sigma'(s)-1)(\sigma(s)-s\sigma'(s))+(\nu_{1}\sigma'(s)-\nu_{2})^2,
\ee
with parameters $\nu_{1}=\alpha+\gamma,\; \nu_{2}=\gamma$.
\end{theorem}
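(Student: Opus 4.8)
The plan is to push the exact $\sigma$-form of Painlev\'{e} VI from Theorem \ref{thm1} through the double scaling limit $n\to\infty$, $t\to 0$ with $s=-n^2 t$ fixed, and then match the resulting equation to the Jimbo--Miwa $\sigma$-form of Painlev\'{e} III \eqref{p3}. Concretely, I would begin from equation \eqref{sod} for $H_n(t)$ rather than from the translated $\sigma_n$, since the shift constants $c_1,c_2$ in Theorem \ref{thm1} grow like $n^2$ and would themselves have to be rescaled. Substitute $t = s/(c\,n^2)$, so that $\frac{d}{dt} = c\,n^2\frac{d}{ds}$ and $\frac{d^2}{dt^2}=c^2 n^4\frac{d^2}{ds^2}$, and write $H_n(s/(cn^2)) = -\sigma(s) + (\text{lower order})$ according to the definition $\sigma(s):=-\lim_{n\to\infty} H_n(s/(cn^2))$. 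The key bookkeeping step is to track the power of $n$ carried by each monomial in \eqref{sod}: the coefficients $4n^2+4n(\alpha+\beta+\gamma)+(\alpha+\beta)^2$, etc., contribute factors $n^2$, the factors of $t$ contribute $n^{-2}$, and the derivatives contribute positive powers of $n$ through the chain rule. One checks that after multiplying through by an overall power of $n$ the leading balance is of order $n^4$ (equivalently $n^0$ after normalization), and this is exactly \eqref{ds}.

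The heart of the computation is therefore the careful expansion of \eqref{sod} term by term. For instance $t^2(t-1)^2(H_n'')^2$ becomes $\frac{s^2}{c^2 n^4}\cdot\big(1+O(n^{-2})\big)\cdot c^4 n^8 (\sigma'')^2 = c^2 n^4 s^2(\sigma'')^2\big(1+O(n^{-2})\big)$; the cubic term $4t(t-1)(H_n')^3$ becomes $4\frac{s}{cn^2}\cdot(-1)\cdot(-c^3 n^6)(\sigma')^3\big(1+O(n^{-2})\big) = 4c^2 n^4 s(\sigma')^3\big(1+O(n^{-2})\big)$ — note the two sign flips, one from $(t-1)\approx -1$ and one from $H_n'\approx -c n^2\sigma'$ cubed; and so on for the $(H_n')^2$, $H_n'$ and $H_n$-only terms, where the crucial observation is that the $\big[4n^2+\cdots\big]t^2$ piece inside the brace multiplying $(H_n')^2$ survives at leading order (it is $\frac{4}{c}s\cdot c^2 n^4(\sigma')^2$ plus lower order), whereas the $(\alpha+\gamma)^2$ piece, being $n$-independent, also survives, but the middle $t$-linear piece inside that brace drops by a factor $n^{-2}$ relative to the others and only reappears in the $O(1/n)$ correction. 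After collecting, every surviving term carries the common factor $c^2 n^4$, which is divided out to leave \eqref{ds} with an explicit $O(1/n)$ term and the $O(1/n^2)$ remainder. Taking $n\to\infty$ kills the last two lines of \eqref{ds}, and the choice $c=-1$ (forced by wanting the $s/c$ and $1/c$, $1/c^2$ coefficients to come out with the signs of \eqref{p3}) yields the displayed autonomous equation $s^2(\sigma'')^2+4s(\sigma')^3-[(\alpha+\gamma)^2+4s+4\sigma](\sigma')^2+2(\alpha\gamma+\gamma^2+2\sigma)\sigma'-\gamma^2=0$.

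Finally I would verify that this last equation is literally \eqref{p3} with $\nu_1=\alpha+\gamma$, $\nu_2=\gamma$ by expanding the right-hand side of \eqref{p3}: $4\sigma'(\sigma'-1)(\sigma-s\sigma')+(\nu_1\sigma'-\nu_2)^2 = 4s(\sigma')^3 - 4(\sigma'+s)(\sigma')^2+\cdots$ — a routine algebraic rearrangement that, after moving $(s\sigma'')^2$ to the left, reproduces term by term the autonomous equation above, with the $\nu_1^2=(\alpha+\gamma)^2$ and $\nu_1\nu_2=\gamma(\alpha+\gamma)$ and $\nu_2^2=\gamma^2$ coefficients falling into place; this pins down $\nu_1=\alpha+\gamma$, $\nu_2=\gamma$ uniquely (up to the sign of $\nu_1$, which is immaterial since only $\nu_1^2$ and $\nu_1\nu_2\cdot\text{(sign)}$ enter and the cross term fixes the relative sign). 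The main obstacle is purely the scale of the algebra: keeping the $O(1/n)$ correction honest in \eqref{ds} requires expanding $(t-1)^k = (-1)^k(1 - kt + \cdots)$ to one more order in several places and consistently collecting terms of order $n^3$ after the leading $n^4$ balance. There is no conceptual difficulty — no new identities beyond \eqref{sod} are needed — but the derivation of \eqref{ds} (and a fortiori the unwritten $O(1/n^2)$ coefficient) is where all the care must go; the reduction of the leading-order equation to \eqref{p3} is then immediate.
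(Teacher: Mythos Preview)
Your proposal is correct and follows essentially the same approach as the paper: substitute $t=s/(cn^2)$ directly into the exact equation \eqref{sod} for $H_n$, expand in powers of $1/n$ to obtain \eqref{ds}, let $n\to\infty$, and then set $c=-1$ to match the Jimbo--Miwa $\sigma$-form \eqref{p3} with $\nu_1=\alpha+\gamma$, $\nu_2=\gamma$. You supply more of the term-by-term bookkeeping and an explicit algebraic check of the final identification than the paper does, but there is no methodological difference.
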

Finally, we show the asymptotic behavior of the Hankel determinant in two special cases, which are related to the largest and smallest eigenvalue distribution in the degenerate Jacobi unitary ensemble.
\begin{theorem}
If $A=0,\;B=1$, then as $t\rightarrow1^{-}$,
\bea\label{asy1}
D_n(t)&\sim&2^{-2n(n+\beta+\gamma)}(2\pi)^{n}\frac{\Gamma\left(\frac{\beta+\gamma+1}{2}\right)G^2\left(\frac{\beta+\gamma+1}{2}\right)G^2\left(\frac{\beta+\gamma}{2}+1\right)}
{G(\beta+\gamma+1)G(\beta+1)G(\gamma+1)}\nonumber\\
&\times&\frac{G(n+1)G(n+\beta+1)G(n+\gamma+1)G(n+\beta+\gamma+1)}{G^2\left(n+\frac{\beta+\gamma+1}{2}\right)G^2\left(n+\frac{\beta+\gamma}{2}+1\right)
\Gamma\left(n+\frac{\beta+\gamma+1}{2}\right)}(1-t)^{n(n+\beta+\gamma)}.
\eea
If $A=1,\;B=-1$, then as $t\rightarrow 0^{+}$,
\bea\label{asy2}
D_n(t)&\sim&2^{-2n(n+\alpha+\gamma)}(2\pi)^{n}\frac{\Gamma\left(\frac{\alpha+\gamma+1}{2}\right)G^2\left(\frac{\alpha+\gamma+1}{2}\right)G^2\left(\frac{\alpha+\gamma}{2}+1\right)}
{G(\alpha+\gamma+1)G(\alpha+1)G(\gamma+1)}\nonumber\\
&\times&\frac{G(n+1)G(n+\alpha+1)G(n+\gamma+1)G(n+\alpha+\gamma+1)}{G^2\left(n+\frac{\alpha+\gamma+1}{2}\right)G^2\left(n+\frac{\alpha+\gamma}{2}+1\right)
\Gamma\left(n+\frac{\alpha+\gamma+1}{2}\right)}t^{n(n+\alpha+\gamma)}.
\eea
Here $G(x)$ is the Barnes G-function, defined by \cite{Askey}
$$
G(x+1)=(2\pi)^{\frac{x}{2}}\mathrm{e}^{-\frac{1}{2}x(x+1)-\frac{1}{2}\gamma x^{2}}\prod_{k=1}^{\infty}\left[\left(1+\frac{x}{k}\right)^{k}\mathrm{e}^{-x+\frac{x^{2}}{2k}}\right],
$$
where $\gamma$ is the Euler-Mascheroni constant.
\end{theorem}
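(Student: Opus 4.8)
The plan is to reduce each of the two asymptotic statements to the classical closed-form evaluation of a Jacobi Hankel determinant, after peeling off the correct power of $1-t$ (resp.\ $t$) by an affine change of variables that rescales the shrinking effective support to $[0,1]$. For $A=0,\,B=1$ one has $w(x,t)=x^{\alpha}(1-x)^{\beta}(x-t)^{\gamma}$ on $(t,1)$ and $w(x,t)=0$ on $(0,t)$, so in the multiple-integral form (\ref{dnt}) the integration is effectively over $[t,1]^{n}$; substituting $x_{j}=t+(1-t)u_{j}$, the squared Vandermonde contributes $(1-t)^{n(n-1)}$ and the $n$ weight factors contribute $(1-t)^{n(\beta+\gamma+1)}$, whence
$$D_{n}(t)=(1-t)^{n(n+\beta+\gamma)}\cdot\frac{1}{n!}\int_{[0,1]^{n}}\prod_{1\le i<j\le n}(u_{i}-u_{j})^{2}\prod_{j=1}^{n}\bigl(t+(1-t)u_{j}\bigr)^{\alpha}u_{j}^{\gamma}(1-u_{j})^{\beta}\,du_{j}.$$
Since $\bigl(t+(1-t)u\bigr)^{\alpha}\le 1$ for all $u,t\in[0,1]$ and tends pointwise to $1$ as $t\to1^{-}$, dominated convergence shows the integral converges to $\widetilde{D}_{n}$, the strictly positive Hankel determinant of the Jacobi weight $u^{\gamma}(1-u)^{\beta}$ on $[0,1]$, so $D_{n}(t)\sim\widetilde{D}_{n}\,(1-t)^{n(n+\beta+\gamma)}$.

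Next I would evaluate $\widetilde{D}_{n}$ in closed form. By (\ref{hankel}), $\widetilde{D}_{n}=\prod_{j=0}^{n-1}h_{j}$ where $h_{j}$ is the squared norm of the $j$-th monic shifted-Jacobi polynomial,
$$h_{j}=\frac{\Gamma(j+1)\,\Gamma(j+\gamma+1)\,\Gamma(j+\beta+1)\,\Gamma(j+\beta+\gamma+1)}{\Gamma(2j+\beta+\gamma+1)\,\Gamma(2j+\beta+\gamma+2)},$$
a classical identity (provable from the Selberg integral or from the standard Jacobi normalisation). The numerator products telescope via $\prod_{j=0}^{n-1}\Gamma(j+a)=G(n+a)/G(a)$; for the denominator I would apply the Legendre duplication formula $\Gamma(2z)=\pi^{-1/2}2^{2z-1}\Gamma(z)\Gamma(z+\tfrac12)$ to each factor, converting $\prod_{j=0}^{n-1}\Gamma(2j+c)$ into an explicit power of $2$, a power of $\pi^{-1/2}$, and two Barnes-$G$ ratios, and then reconciling the half-integer shifts with $G(z+1)=\Gamma(z)G(z)$. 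Collecting all powers of $2$ and $\pi$ produces the prefactor $2^{-2n(n+\beta+\gamma)}(2\pi)^{n}$, and rearranging the $G$-functions reproduces (\ref{asy1}).

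For $A=1,\,B=-1$ the weight is $x^{\alpha}(1-x)^{\beta}(t-x)^{\gamma}$, supported effectively on $[0,t]$; the substitution $x_{j}=tu_{j}$ peels off $t^{n(n+\alpha+\gamma)}$ and reduces everything to the Hankel determinant of $u^{\alpha}(1-u)^{\gamma}$ on $[0,1]$, yielding (\ref{asy2}) after the same $G$-function computation. Equivalently, and more cheaply, the reflection $x\mapsto1-x$ (which also sends $t\mapsto1-t$) carries the $A=1,\,B=-1$ weight onto the $A=0,\,B=1$ weight with $\alpha$ and $\beta$ interchanged and maps $t\to0^{+}$ to $1-t\to1^{-}$, so (\ref{asy2}) follows from (\ref{asy1}) by interchanging $\beta$ with $\alpha$ and replacing $1-t$ by $t$. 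The only genuinely delicate point in the whole argument is the Barnes-$G$ bookkeeping in the denominator — keeping the half-integer arguments, the powers of $2$, and the $\pi$-factors aligned so that they assemble exactly into $2^{-2n(n+\beta+\gamma)}(2\pi)^{n}$ and the stated ratio of $G$-functions; the analytic input (dominated convergence, positivity of $\widetilde{D}_{n}$) is routine.
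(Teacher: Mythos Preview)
Your proposal is correct and follows essentially the same route as the paper: an affine change of variables collapses the effective support to $[0,1]$, peels off the power $(1-t)^{n(n+\beta+\gamma)}$ (resp.\ $t^{n(n+\alpha+\gamma)}$), and the limit leaves the Hankel determinant of a pure Jacobi weight, which is then evaluated in closed form. The only substantive differences are that the paper cites formula~(1.6) of Basor--Chen~\cite{Basor2005} for the Jacobi Hankel constant rather than rederiving it via the $h_{j}$ product and Legendre duplication as you propose, and the paper treats the second case by an independent (but parallel) change of variables rather than invoking the $x\mapsto1-x$ symmetry you note; both of these are cosmetic, and your added dominated-convergence justification is a welcome piece of rigor the paper omits.
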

\begin{proof}
If $A=0, B=1$, then (\ref{dnt}) becomes
$$
D_{n}(t)=\frac{1}{n!}\int_{[t,1]^{n}}\prod_{1\leq i<j\leq n}(x_{i}-x_{j})^{2}\prod_{j=1}^{n}x_{j}^{\alpha}(1-x_{j})^{\beta}(x_{j}-t)^{\gamma}dx_{j}.
$$
After the change of variables $x_{j}=y_{j}+t,\;j=1,2,\ldots,n$, the above multiple integral turns to
\bea
D_{n}(t)&=&\frac{1}{n!}\int_{[0,1-t]^{n}}\prod_{1\leq i<j\leq n}(y_{i}-y_{j})^{2}\prod_{j=1}^{n}(y_{j}+t)^{\alpha}(1-t-y_{j})^{\beta}y_{j}^{\gamma}dy_{j}\nonumber\\
&=&(1-t)^{n(n+\beta+\gamma)}\frac{1}{n!}\int_{[0,1]^{n}}\prod_{1\leq i<j\leq n}(y_{i}-y_{j})^{2}\prod_{j=1}^{n}y_{j}^{\gamma}(1-y_{j})^{\beta}[(1-t)y_{j}+t]^{\alpha}dy_{j},\nonumber
\eea
where we replace $y_j$ by $(1-t)y_j$ in the second equality.\\
As $t\rightarrow1^{-}$,
\bea
D_{n}(t)&\sim&(1-t)^{n(n+\beta+\gamma)}\frac{1}{n!}\int_{[0,1]^{n}}\prod_{1\leq i<j\leq n}(y_{i}-y_{j})^{2}\prod_{j=1}^{n}y_{j}^{\gamma}(1-y_{j})^{\beta}dy_{j}\nonumber\\
&=&2^{-n(n+\beta+\gamma)}(1-t)^{n(n+\beta+\gamma)}\frac{1}{n!}\int_{[-1,1]^{n}}\prod_{1\leq i<j\leq n}(y_{i}-y_{j})^{2}\prod_{j=1}^{n}(1-y_{j})^{\beta}(1+y_{j})^{\gamma}dy_{j}\nonumber\\
&=&2^{-n(n+\beta+\gamma)}(1-t)^{n(n+\beta+\gamma)}D_{n}[w_{\beta,\gamma}],\nonumber
\eea
where we replace the variable $y_j$ by $\frac{1+y_j}{2}$ in the second step, and $D_{n}[w_{\beta,\gamma}]$ is the Hankel determinant for the standard Jacobi weight $(1-y_{j})^{\beta}(1+y_{j})^{\gamma}$.
According to the formula (1.6) in \cite{Basor2005}, we obtain (\ref{asy1}).

On the other hand, in the case $A=1, B=-1$, from (\ref{dnt}) we have
$$
D_{n}(t)=\frac{1}{n!}\int_{[0,t]^{n}}\prod_{1\leq i<j\leq n}(x_{i}-x_{j})^{2}\prod_{j=1}^{n}x_{j}^{\alpha}(1-x_{j})^{\beta}(t-x_{j})^{\gamma}dx_{j}.
$$
By the change of variables $x_j=ty_j,\;j=1,2,\ldots,n$, it gives
$$
D_{n}(t)=t^{n(n+\alpha+\gamma)}\frac{1}{n!}\int_{[0,1]^{n}}\prod_{1\leq i<j\leq n}(y_{i}-y_{j})^{2}\prod_{j=1}^{n}y_{j}^{\alpha}(1-y_{j})^{\gamma}(1-ty_{j})^{\beta}dy_{j}.
$$
As $t\rightarrow 0^{+}$, we find
\bea
D_{n}(t)&\sim& t^{n(n+\alpha+\gamma)}\frac{1}{n!}\int_{[0,1]^{n}}\prod_{1\leq i<j\leq n}(y_{i}-y_{j})^{2}\prod_{j=1}^{n}y_{j}^{\alpha}(1-y_{j})^{\gamma}dy_{j}\nonumber\\
&=&2^{-n(n+\alpha+\gamma)}t^{n(n+\alpha+\gamma)}\frac{1}{n!}\int_{[-1,1]^{n}}\prod_{1\leq i<j\leq n}(y_{i}-y_{j})^{2}\prod_{j=1}^{n}(1-y_{j})^{\alpha}(1+y_{j})^{\gamma}dy_{j}\nonumber\\
&=&2^{-n(n+\alpha+\gamma)}t^{n(n+\alpha+\gamma)}D_{n}[w_{\alpha,\gamma}],\nonumber
\eea
where we replace the variable $y_j$ by $\frac{1-y_j}{2}$ in the second equality. Similarly, by using the formula (1.6) in \cite{Basor2005}, we obtain (\ref{asy2}).
\end{proof}

\section*{Acknowledgments}
Chao Min was supported by the Scientific Research Funds of Huaqiao University under grant number 600005-Z17Y0054.
Yang Chen was supported by the Macau Science and Technology Development Fund under grant number FDCT 023/2017/A1 and by the University of Macau under grant number MYRG 2018-00125-FST.


\begin{thebibliography}{}
%
%
\bibitem{Askey}
{R. A. Askey} and {R. Roy}, {\em Gamma Function}, {Pages 135--147 of: NIST Handbook of Mathematical Functions}, {Ed. by F. W. J. Olver, D. W. Lozier, R. F. Boisvert and C. W. Clark}, {Cambridge University Press}, {Cambridge}, {2010}.
\bibitem{Basor2005}
{E. L. Basor} and {Y. Chen}, {\em Perturbed Hankel determinants}, {J. Phys. A: Math. Gen.} {\bf 38} ({2005}) {10101--10106}.
\bibitem{Basor2009}
{E. L. Basor} and {Y. Chen}, {\em Painlev\'{e} V and the distribution function of a discontinuous linear statistic in the Laguerre unitary ensembles}, {J. Phys. A: Math. Theor.} {\bf 42} ({2009}) {035203}.
\bibitem{Basor2015}
{E. L. Basor} and {Y. Chen}, {\em Perturbed Laguerre unitary ensembles, Hankel determinants, and information theory}, {Math. Meth. App. Sci.} {\bf 38} ({2015}) {4840--4851}.
\bibitem{Basor2010}
{E. L. Basor}, {Y. Chen} and {T. Ehrhardt}, {\em Painlev\'{e} V and time-dependent Jacobi polynomials}, {J. Phys. A: Math. Theor.} {\bf 43} ({2010}) {015204}.
\bibitem{BCM}
{E. Basor}, {Y. Chen} and {N. Mekareeya}, {\em The Hilbert series of $\mathcal{N}$=1 SO($N_{c}$) and Sp($N_{c}$) SQCD, Painlev\'{e} VI and integrable systems}, {Nucl. Phys. B} {\bf 860} ({2012}) {421--463}.
\bibitem{Basor2012}
{E. L. Basor}, {Y. Chen} and {L. Zhang}, {\em PDEs satisfied by extreme eigenvalues distributions of GUE and LUE}, {Random Matrices: Theor. Appl.} {\bf 1} ({2012}) {1150003} (21 pages).
\bibitem{Bogatskiy}
{A. Bogatskiy}, {T. Claeys} and {A. Its}, {\em Hankel determinant and orthogonal polynomials for a Gaussian weight with a discontinuity at the edge}, {Commun. Math. Phys.} {347} ({2016}) {127--162}.
\bibitem{ChenDai}
{Y. Chen} and {D. Dai}, {\em Painlev\'{e} V and a Pollaczek-Jacobi type orthogonal polynomials}, {J. Approx. Theory} {\bf 162} ({2010}) {2149--2167}.

\bibitem{ChenFeigin}
{Y. Chen} and {M. V. Feigin}, {\em Painlev\'e IV and degenerate Gaussian unitary ensembles}, {J. Phys. A: Math. Gen.} {\bf 39} (2006) 12381--12393.

\bibitem{Chen2013}
{Y. Chen}, {N. S. Haq} and {M. R. McKay},
{\em Random matrix models, double-time Painlev\'{e} equations, and wireless relaying}, {J. Math. Phys.} {\bf 54} (2013) 063506.

\bibitem{ChenIts}
{Y. Chen} and {A. Its}, {\em Painlev\'{e} III and a singular linear statistics in Hermitian random matrix ensembles, I}, {J. Approx. Theory} {\bf 162} ({2010}) {270--297}.
\bibitem{ChenMcKay2012}
{Y. Chen} and {M. R. McKay}, {\em Coulomb fluid, Painlev\'{e} transcendents, and the information theory of MIMO systems}, IEEE Trans. Inf. Theory {\bf 58} (2012) 4594--4634.
\bibitem{CJ}
{Y. Chen}, {N. Jokela}, {M. J$\mathrm{\ddot{a}}$rvinen} and {N. Mekareeya}, {\em Moduli space of supersymmetric QCD in the Veneziano limit}, J. High Energy Phys. {\bf 9} (2013) 1--38.

\bibitem{ChenZhang}
{Y. Chen} and {L. Zhang}, {\em Painlev\'{e} VI and the unitary Jacobi ensembles}, {Stud. Appl. Math.} {\bf 125} ({2010}) {91--112}.
\bibitem{Chihara}
{T. S. Chihara}, {\em An Introduction to Orthogonal Polynomials}, {Dover}, {New York}, {1978}.
\bibitem{Dai}
{D. Dai} and {L. Zhang}, {\em Painlev\'{e} VI and Hankel determinants for the generalized Jacobi weight}, {J. Phys. A: Math. Theor.} {\bf 43} ({2010}) {055207}.
\bibitem{Erdelyi}
{A. Erd\'{e}lyi}, {\em Higher Transcendental Functions}, {Vol. II}, {McGraw Hill}, {New York}, {1953}.
\bibitem{Gromak}
{V. I. Gromak}, {I. Laine}, {S. Shimomura}, {\em Painlev\'{e} Differential Equations in the Complex Plane}, {Walter de Gruyter}, {Berlin}, {2002}.

\bibitem{Ismail}
{M. E. H. Ismail}, {\em Classical and Quantum Orthogonal Polynomials in One Variable}, {Encyclopedia of Mathematics and its Applications 98}, {Cambridge University Press}, {Cambridge}, {2005}.
\bibitem{Jimbo1982}
{M. Jimbo}, {\em Monodromy problem and the boundary condition for some Painlev\'{e} equations}, {Publ. RIMS, Kyoto Univ.} {\bf 18} ({1982}) {1137--1161}.
\bibitem{Jimbo1980}
{M. Jimbo}, {T. Miwa}, {Y. Mori} and {M. Sato}, {\em Density matrix of an impenetrable Bose gas and the fifth Painlev\'{e} transcendent}, {Physica D} {\bf 1} ({1980}) {80--158}.
\bibitem{Jimbo1981}
{M. Jimbo} and {T. Miwa}, {\em Monodromy preserving deformation of linear ordinary differential equations with rational coefficients. II}, {Physica D} {\bf 2} ({1981}) {407--448}.
\bibitem{Lyu2018}
{S. Lyu}, {Y. Chen} and {E. Fan}, {\em Asymptotic gap probability distributions of the Gaussian unitary ensembles and Jacobi unitary ensembles}, {Nucl. Phys. B} {\bf 926} ({2018}) {639--670}.
\bibitem{Lyu2019}
{S. Lyu}, {J. Griffin} and {Y. Chen},
{\em The Hankel determinant associated with a singularly perturbed Laguerre unitary ensemble}, {J. Nonlinear Math. Phys.} {\bf 26} ({2019}) {24--53}.
\bibitem{Mehta}
{M. L. Mehta}, {\em Random Matrices}, {3rd edn.}, {Elsevier}, {New York}, {2004}.
\bibitem{McCoy}
{B. M. McCoy}, {C. A. Tracy} and {T. T. Wu}, {\em Spin-spin correlation functions for the two-dimensional Ising model}, {Pages 83--97 of: Statistical Mechanics and Statistical Methods in Theory and Application}, {Ed. by U. Landman}, {Plenum Press}, {New York}, {1977}.
\bibitem{Min2018}
{C. Min} and {Y. Chen}, {\em Gap probability distribution of the Jacobi unitary ensemble: an elementary treatment, from finite $n$ to double scaling}, {Stud. Appl. Math.} {\bf 140} ({2018}) {202--220}.
\bibitem{Min2019a}
{C. Min} and {Y. Chen}, {\em Painlev\'{e} transcendents and the Hankel determinants generated by a discontinuous Gaussian weight}, {Math. Meth. Appl. Sci.} {\bf 42} ({2019}) {301--321}.
\bibitem{Min2019b}
{C. Min} and {Y. Chen}, {\em Painlev\'{e} V, Painlev\'{e} XXXIV and the degenerate Laguerre unitary ensemble}, {accepted for publication in Random Matrices: Theor. Appl.}, {arXiv: 1901.00318}.
\bibitem{MinLyuChen}
{C. Min}, {S. Lyu} and {Y. Chen}, {\em Painlev\'{e} III$'$ and the Hankel determinant generated by a singularly perturbed Gaussian weight}, {Nucl. Phys. B} {\bf 936} ({2018}) {169--188}.
\bibitem{Szego}
{G. Szeg\H{o}}, {\em Orthogonal Polynomials}, {4th edn.}, {AMS Colloquium Publications}, {Vol. 23}, {Providence, RI}, {1975}.
\bibitem{Wu}
{X.-B. Wu}, {S.-X. Xu} and {Y.-Q. Zhao}, {\em Gaussian unitary ensemble with boundary spectrum singularity and $\sigma$-form of the Painlev\'{e} II equation}, {Stud. Appl. Math.} {\bf 140} ({2018}) {221--251}.
\bibitem{Xu2015}
{S.-X. Xu}, {D. Dai} and {Y.-Q. Zhao}, {\em Painlev\'{e} III asymptotics of Hankel determinants for a singularly perturbed Laguerre weight}, {J. Approx. Theory} {\bf 192} ({2015}) {1--18}.
\bibitem{Xu2016}
{S.-X. Xu}, {D. Dai}, {Y.-Q. Zhao}, {\em Hankel determinants for a singular complex weight and the first and third Painlev\'{e} transcendents}, {J. Approx. Theory} {\bf 205} ({2016}) {64--92}.
\bibitem{Zeng}
{Z.-Y. Zeng}, {S.-X. Xu} and {Y.-Q. Zhao}, {\em Painlev\'{e} III asymptotics of Hankel determinants for a perturbed Jacobi weight}, {Stud. Appl. Math.} {\bf 135} ({2015}) {347--376}.




\end{thebibliography}
\end{document}